\newcommand{\E}           {\mathbb{E}}        
\newcommand{\bF}          {\mathbb{F}}        
\newcommand{\bP}          {\mathbb{P}}        
\newcommand{\R}           {\mathbb{R}}        
\newcommand{\sA}          {\mathcal{A}}
\newcommand{\sF}          {\mathcal{F}}
\newcommand{\sN}          {\mathcal{N}}
\newcommand{\sU}          {\mathcal{U}}
\newcommand{\rcmp}       {r^{comp}}
\numberwithin{equation}{section}
\theoremstyle{plain}                
\newtheorem{theorem}{Theorem}[section]
\newtheorem{proposition}[theorem]{Proposition}
\newtheorem{corollary}[theorem]{Corollary}
\theoremstyle{definition}           
\newtheorem{definition}[theorem]{Definition}
\newtheorem{tablebox}[theorem]{Table}
\theoremstyle{remark}               
\newtheorem{remark}{Remark}[section]
\begin{document}

\begin{center}
  {\Large Existence of a Radner equilibrium in a model with 
  transaction costs}
  \vskip .3in
  Kim Weston\footnote{The author would like to thank the anonymous referee and editor, Frank Riedel, as well as Gordan \v{Z}itkovi\'c, Mihai S\^irbu, Kasper Larsen, Johannes Muhle-Karbe, Frank Seifried, and Matteo Burzoni for helpful discussions on this work.  The author acknowledges support by the National Science Foundation under Grant No.\ DMS-1606253.  Any opinions, findings and conclusions or recommendations expressed in this material are those of the author and do not necessarily reflect the views of the National Science Foundation (NSF).} \\
  Rutgers University \\
  Department of Mathematics \\
  Piscataway, NJ  08854, USA
  \vskip .3in
  \today
\end{center}

\vskip .4in

\abstract{We prove the existence of a Radner equilibrium in a model with proportional transaction costs on an infinite time horizon and analyze the effect of transaction costs on the endogenously determined interest rate.  Two agents receive exogenous, unspanned income and choose between consumption and investing into an annuity.  After establishing the existence of a discrete-time equilibrium, we show that the discrete-time equilibrium converges to a continuous-time equilibrium model.  The continuous-time equilibrium provides an explicit formula for the equilibrium interest rate in terms of the transaction cost parameter. We analyze the impact of transaction costs on the equilibrium interest rate and welfare levels.}

\vskip .3in
\noindent{\textit{Keywords:} Transaction costs, Radner equilibrium, Shadow prices, Incompleteness}\\
\noindent{\textit{JEL Classification:} D52, G12, G11}\\
\noindent{\textit{Mathematics Subject Classification (2010):} 91B51, 91B25}

\section{Introduction}
We study an incomplete Radner equilibrium with proportional transaction costs.  Transaction costs influence asset prices, yet most asset pricing models with transaction costs take some or all asset prices as given.  We seek to answer two questions:
\begin{enumerate}[\itshape(1)]
  \item \emph{
  Does a finite-agent equilibrium with transaction costs and an endogenously determined interest rate exist?}
  \item \emph{If an equilibrium exists, what is the effect of transaction costs on the interest rate and welfare levels?}
\end{enumerate}
We devise a model to answer these questions and reveal an unexpected 
effect from transaction costs on the endogenous asset prices.  Depending on agent risk preferences and consumption smoothing over time, a proportional transaction cost can increase or decrease equilibrium interest rates.  We provide an explicit formula for the continuous-time equilibrium interest rate in terms of the transaction costs and other input parameters.  We find that welfare decreases with increases in the transaction costs.

Proving the existence of a general equilibrium is a challenging problem on its own, and frictions exacerbate the difficulties. Existing work on equilibria with transaction costs lacks the ability to endogenously derive asset prices while providing rigorous justification in a finite-agent Radner equilibrium.  The works \cite{V98RFS}, \cite{LMW04JPE}, and \cite{D16wp} rely on an exogenously specified bank account, while 
\cite{LMW04JPE} and \cite{BUV14wp} provide numerics but no existence result.  Models with a continuum of agents are introduced in \cite{VV99ET}, \cite{V98RFS}, and \cite{H03JET}.  An approximate equilibrium concept is introduced in \cite{HMK17wp}.  In contrast to the existing literature, we introduce transaction costs into a finite-agent Radner equilibrium with unspanned income and consumption over an infinite time horizon.  This set-up is longstanding without transaction costs; see, for instance, \cite{C01JET} and \cite{W03AER}.  We prove the existence of a proportional transaction cost Radner equilibrium and analyze its effects on endogenous asset prices.

In our model, two exponential investors receive unspanned income, consume, and trade in an annuity market on an infinite time horizon.  The individual agent's optimal investment problem in the presence of proportional transaction costs is well-studied and dates back to \cite{MC76JET} and \cite{DN90MOR}.  The equilibrium setting in \cite{VV99ET} is most similar to ours: In \cite{VV99ET}, a continuum of agents trade in two annuities in an overlapping generations equilibrium while one of the annuities faces real proportional transaction costs from trading.  However, \cite{VV99ET} assume that the zero transaction cost economy supports an equilibrium in which the agents' wealth first increases then decreases.  We make no such assumption, and our model does not exhibit this behavior.  We provide an example of an equilibrium with proportional transaction costs, which derives all traded asset prices endogenously and analytically describes the effect of transaction costs on the equilibrium interest rate.

Understanding the effects of transaction costs analytically is often not possible.  In \cite{JS04fS}, \cite{ST13SICON}, and \cite{KMK16MF}, the authors derive a Taylor expansion for the single-agent value function and no-trade boundaries for small transaction costs.  In equilibrium, we seek to understand the effect of transaction costs on the agents' behavior and equilibrium interest rate.  The closeness of the agents' input parameters determine explicitly when the agents are motivated to trade in equilibrium.    We also derive an explicit formula for the interest rate in terms of the transaction costs in continuous time.  For agents $i=1,2$, we let $\alpha_i>0$ be agent $i$'s risk aversion, $\beta_i>0$ be the time-preference parameter for consumption, $\mu_i$ be agent $i$'s income stream's drift, and $\sigma_i$ be the volatility of the (unspanned) Brownian component in agent $i$'s income stream. We let $\lambda\in[0,1)$ be the proportional transaction cost parameter.  Then in an equilibrium in which agent 1 chooses to buy and agent 2 chooses to sell, the equilibrium interest rate is given explicitly by
$$
  r(\lambda) = \frac{\tilde\beta_1/\alpha_1+\tilde\beta_2/\alpha_2}
    {\frac{1}{\alpha_1(1+\lambda)}+\frac{1}{\alpha_2(1-\lambda)}}.
$$

\label{intro:formula}The relationship between the agents' risk aversions determine whether equilibrium interest rates will increase or decrease when transaction costs are introduced.  Our model shows that typically the presence of transaction costs will decrease the equilibrium interest rate.  However, when one agent seeks to trade aggressively in order to ensure her future consumption while the other agent trades timidly and consumes more readily, then it is possible for small transaction costs to increase the equilibrium interest rate.  In this case, the agents balance each other in their consumption timing and risk appetites so that a premium is placed on the annuity's future consumption stream when transaction costs are introduced.  Our model is admittedly stylized, yet it captures the equilibrium interest rate and trading behavior with transaction costs in a long-standing, classical setting.

We employ a shadow price approach to establish an equilibrium in order to gain tractability of the single-agent problems.  Shadow prices represent the traded asset price in a least-favorable frictionless market completion, where the optimal investment and consumption strategies align between the frictionless shadow market and the transaction cost market.  Shadow prices for proportional transaction costs were introduced by \cite{JK95JET} and \cite{CK96MF} and have since been established in increasingly greater generality; see, for example,  \cite{KMK10AAP} and \cite{CS16AAP}.  Because least-favorability is investor specific, each economic agent will select her own frictionless shadow market to perform utility maximization.  We link the investor-specific shadow markets using a ``closeness'' condition in equilibrium.  
We show that a unique equilibrium asset price is only guaranteed when a trade occurs.  Otherwise, agents' shadow prices allow for a range of prices consistent with the equilibrium.

Several components of this equilibrium example are crucial for obtaining our results.  We rely on the agents' exponential preferences and income processes with independent increments for tractability of the single-agent problem similar to \cite{W03AER}, \cite{CLM12JET}, and \cite{LS16MAFE}.  In a frictionless model with deterministic interest rates, an annuity is spanned by a bank account, and vice versa.  With transaction costs, we cannot freely move between an annuity and the bank account as the traded security.  We work with the annuity as the traded security similar to \cite{H03JET} and \cite{VV99ET}.  This choice yields trading strategies in which the agents choose to do the same thing at every time point: either buy, sell, or trade nothing.  Theorem~\ref{thm:bank} proves that the constant interest rate equilibrium obtained by trading in the annuity is not possible when the bank account is the traded security.  This model is indeed highly stylized, yet it provides the first existence proof of a finite-agent Radner equilibrium with proportional transaction costs in which all traded asset prices are derived endogenously.

The paper is organized as follows.  Section~\ref{section:discrete} describes the discrete-time equilibrium and proves its existence in Theorem~\ref{thm:discrete_equilibrium}.  Section~\ref{section:cts} considers a continuous-time equilibrium model.  The existence of an equilibrium is established in Theorem~\ref{thm:cts_eq}, and it is shown to be the limit of discrete-time equilibria.  We analyze the impact of transaction costs on interest rates and welfare in Section~\ref{section:effects}.  Section~\ref{section:bank} discusses a transaction cost equilibrium with a traded bank account.  The proofs are contained in Section~\ref{section:proofs}.

\section{Discrete-Time Equilibrium}\label{section:discrete}

We consider a discrete-time infinite time horizon Radner equilibrium without a risky asset.  There is a single consumption good, which we take to be the numeraire.  Time is divided into intervals $[t_n,t_{n+1})$, $n\geq 0$, where $t_n:=n \Delta$ and $\Delta>0$.  An annuity, denoted by $A$, is in one-net supply and is available to trade with an exogenously specified proportional transaction cost $\lambda\in[0,1)$.  One share in the annuity delivers consumption units at a rate of one per unit time over all future time intervals.  Thus, a share in the annuity will deliver $\Delta$ consumption units over each time interval $[t_n,t_{n+1})$.

The risk-free rate $r>0$ will be determined endogenously in equilibrium using the equilibrium annuity values.  We will focus on equilibria allowing for constant, positive interest rates even though our definition of equilibrium does not exclude general interest rates.  
The annuity dynamics are given by
$$
  A_{t_{n+1}} - A_{t_n} = \left(A_{t_n} r - 1\right)\Delta, 
  \ \ \  A_0>0.
$$
In this case, the annuity value will be the constant $A_{t_n} = A= 1/r$.  \label{comment:bank}We choose to study a model with a traded annuity rather than a bank account because the annuity will provide the mathematical structure needed to establish an equilibrium with transaction costs.  We discuss the implications of a traded bank account in Section~\ref{section:bank}.

Each agent has the exogenous income stream $Y_i = (Y_{i t_n})_{n\geq 0}$ given by,
$$
  Y_{i t_{n+1}} = Y_{i t_n}+ \mu_i\Delta+\sqrt{\Delta}\sigma_i Z_{i t_{n+1}}, \ \ \ Y_{i0}\in\R,
$$
where $\mu_i\in\R$, $\sigma_i>0$, and $Z_{i t_{n+1}}\sim\sN(0,1)$ for $i=1,2$.  The random variables $(Z_{i t_n})_{n\geq 1}$ are independent, and $Z_{1 \cdot}$ and $Z_{2 \cdot}$ are possibly correlated.  The agents are also endowed with an initial allocation of annuity shares $\theta_{i 0}\in\R$ such that $\theta_{1 0}+\theta_{2 0}=1$.

The flow of information in this economy is given by $\bF=(\sF_{t_n})_{n\geq 0}$, where $\sF_{t_n}=\sigma(Z_{i t_1},\cdots,Z_{i t_n}: i=1,2)$.  All processes are assumed to be adapted to $\bF$, and all agents share the same filtration and probability $\bP$.  All equalities are assumed to hold $\bP$-almost surely.

\subsection{Individual Agent Problems}

Rather than deal directly with an optimization problem in a market with frictions, we cast each individual investor's problem as a problem in her own frictionless shadow market.  In equilibrium, the agents' shadow markets will be related, and a unique (non-shadow) equilibrium price for the traded annuity will exist when a trade occurs.  Yet the individual optimization problems are treated in isolation as frictionless.  Therefore, only in the next section (Section~\ref{section:discrete_equilibrium}) will the parameter $\lambda$ appear.

We first consider the single-agent investment and consumption problem for agent $i\in\{1,2\}$.  At time $t_n\geq 0$, agent $i$ chooses to consume $c_{t_n}$ units of the consumption good and invest $\theta_{t_n}$ shares in the annuity beginning with an initial allocation of $\theta_{0}$.  We consider equilibria for which the value of the shadow annuity and shadow interest rate (to be determined endogenously in equilibrium) are constants $A_{i t_n}=A_i=1/r_i$ and $r_i>0$, respectively.  

For a given investment strategy $\theta$, agent $i$'s shadow wealth is defined by
$$
  X_{i t_n} := \theta_{t_{n}} A_{i t_n},
$$
with the self-financing condition
\begin{equation}\label{eqn:discrete_sf}
  (\theta_{t_{n+1}}-\theta_{t_{n}})A_{i t_{n+1}} 
  = \left(Y_{i t_n}-c_{t_n}+\theta_{t_n}\right)\Delta, \ \ \ n\geq 0.
\end{equation}
For a given consumption and investment strategy $(c,\theta)$, 
the wealth evolves as
$$
  X^c_{i t_{n+1}}-X^c_{i t_n} 
  = \left(X^c_{i t_n}r_i + Y_{i t_n} - c_{t_n}\right)\Delta, 
  \ \ \  X_{i0} = \theta_{0}A_{i0}.
$$
Given a consumption strategy $c$ and an initial share allocation $\theta_{0}$, the self-financing condition dictates the investment strategy $\theta$.

We consider agents with exponential preferences over running consumption; that is, agent $i$'s utility function is $c\mapsto -e^{-\alpha_i c}$ for $\alpha_i>0$.  The agents prefer consumption now to consumption later, which is measured by their time-preference parameters $\beta_i>0$.  For $i=1,2$, $t\geq 0$, and $c\in\R$, we define $U_i(t,c) :=  -e^{-\beta_i t-\alpha_i c}$.

\begin{definition}\label{def:discrete_admissibility}
A consumption strategy $c$ is called \textit{admissible for agent $i$} if $c$ 
satisfies the transversality requirement
$$
  \E\left[\exp\left(-\beta_i t_n - \alpha_i r_i X^c_{t_n} - \alpha_i Y_{t_n}\right)\right] \longrightarrow 0 \ \ \ \text{ as $n\rightarrow\infty$.}
$$
In this case, we write $c\in\sA^\Delta_{i}$.
\end{definition}

The value function is defined by
\begin{equation}\label{eqn:discrete_value_function}
  V^\Delta_i (x,y) := \sup_{c\in\sA^\Delta_i} \sum_{n=0}^\infty 
  \E\left[U_i(t_n, c_{t_n})\right], 
  \ \ \ x,y\in\R.
\end{equation}

\label{remark:constant-r}
  It is possible to consider non-constant shadow interest rates in the individual agent formulation, however it comes at the cost of not being able to explicitly describe the individual agents' transversality conditions.  We refer the reader to Chapter 9 Section D of \cite{D01} for further details.  Our constant shadow interest rate formulation allows us to explicitly characterize the optimal consumption and wealth processes, which we summarize in Theorem~\ref{thm:discrete_individual}.

\begin{theorem}\label{thm:discrete_individual}
  Agent $i$'s optimal consumption and wealth process in \eqref{eqn:discrete_value_function} are given by
  \begin{equation}\label{eqn:discrete_opt_consumption}
    \hat c_{i t_n} = r_i\hat X_{i t_n}+ Y_{i t_n}
    + \frac{1}{\alpha_ir_i\Delta}\left(\tilde\beta_i\Delta-\log(1+r_i\Delta)\right)
  \end{equation}
  and
  \begin{equation}\label{eqn:discrete_opt_wealth}
    \hat X_{i t_n} = \frac{\theta_{i0}}{r_i} + \frac{t_n}{\alpha_i r_i}
    \left(\frac{1}{\Delta}\log\left(1+r_i\Delta\right)
    -\tilde\beta_i\right),
  \end{equation}
  where $\tilde\beta_i := \beta_i+\alpha_i\mu_i 
  -\frac{\alpha_i^2\sigma_i^2}{2}$. 
  Moreoever, the value function can be expressed in the form
  $$
    V^\Delta_i(x,y) = J^\Delta_i(x,y) 
    := - \frac{1}{r_i\Delta} \left(1+r_i\Delta\right)^{1+\frac{1}{r_i\Delta}}\ 
    \exp\left(-\alpha_i r_i 
    x - \alpha_i y 
    - \frac{\tilde\beta_i}{r_i}\right).
  $$
\end{theorem}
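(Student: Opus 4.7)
The plan is to treat the single agent problem as an infinite-horizon stochastic control problem and solve it by a guess-and-verify dynamic programming argument. Because the state $(X^c_{i t_n}, Y_{i t_n})$ is Markov under the given dynamics and the criterion is time-separable with stationary discount $e^{-\beta_i\Delta}$, the value should be independent of time and satisfy the Bellman equation
\begin{equation*}
  J(x,y) = \sup_{c\in\R}\Bigl\{-e^{-\alpha_i c} + e^{-\beta_i\Delta}\,\E\bigl[J(x+(xr_i+y-c)\Delta,\ y+\mu_i\Delta+\sqrt{\Delta}\sigma_i Z)\bigr]\Bigr\}.
\end{equation*}
Motivated by the exponential preferences and the linear Gaussian dynamics of $Y_i$, I would guess the ansatz $J(x,y) = -K\exp(-\alpha_i r_i x - \alpha_i y - \tilde\beta_i/r_i)$ for a constant $K>0$ to be determined.

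Substituting the ansatz, the Gaussian expectation $\E[e^{-\alpha_i\sqrt{\Delta}\sigma_i Z}] = e^{\alpha_i^2\sigma_i^2\Delta/2}$ combines with the drift $\mu_i\Delta$ exactly into the bookkeeping constant $\tilde\beta_i\Delta$, and the $x$-dependence inside the expectation reduces (via $x' = x+(xr_i+y-c)\Delta$) to terms proportional to $-\alpha_i(r_i x + y)$, so that both sides of the Bellman equation have the same functional dependence on $(x,y)$. The first-order condition in $c$ is then linear after taking logarithms and immediately forces the candidate optimum $\hat c = r_i x + y + b$ with
\begin{equation*}
  b = \frac{1}{\alpha_i r_i\Delta}\bigl(\tilde\beta_i\Delta - \log(1+r_i\Delta)\bigr),
\end{equation*}
matching \eqref{eqn:discrete_opt_consumption}. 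Plugging $\hat c$ back into the Bellman identity yields two relations for the same constant $K$: the first-order condition gives $r_i\Delta K = (1+r_i\Delta)^{1+1/(r_i\Delta)}$, which is precisely the stated constant, and a direct check confirms that the Bellman identity $1 = \tfrac{1}{K}e^{-\alpha_i b + \tilde\beta_i/r_i} + e^{\alpha_i r_i\Delta b - \tilde\beta_i\Delta}$ reduces to the trivial identity $\tfrac{r_i\Delta}{1+r_i\Delta} + \tfrac{1}{1+r_i\Delta} = 1$. The self-financing dynamics with $\hat c$ then yield $\hat X_{i t_{n+1}} - \hat X_{i t_n} = -b\Delta$, which integrates to \eqref{eqn:discrete_opt_wealth}.

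For the verification step, I would fix any admissible $c\in\sA^\Delta_i$ and set $M_N := \sum_{n=0}^{N-1} e^{-\beta_i t_n} U_i(c_{t_n}) + e^{-\beta_i t_N} J(X^c_{i t_N}, Y_{i t_N})$. The Bellman inequality implies $M_N$ is a supermartingale, so $J(x,y) = M_0 \geq \E[M_N]$; letting $N\to\infty$ and using the transversality condition in Definition~\ref{def:discrete_admissibility} to kill the terminal term (which is a constant multiple of $\E[\exp(-\beta_i t_N - \alpha_i r_i X^c_{t_N} - \alpha_i Y_{t_N})]$) gives $J(x,y) \geq \sum_{n\geq 0}\E[e^{-\beta_i t_n} U_i(c_{t_n})]$, hence $J \geq V^\Delta_i$. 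For the reverse, with $c = \hat c$ the supermartingale becomes a martingale, so $M_0 = \E[M_N]$ for every $N$; the key subclaim to verify is that $\hat c$ itself is admissible. Plugging the linear $\hat X_{i t_n}$ into the transversality functional, the deterministic exponent in $t_n$ reduces, after the identity $\tilde\beta_i - \beta_i = \alpha_i\mu_i - \tfrac{\alpha_i^2\sigma_i^2}{2}$ cancels the contribution of $\E[e^{-\alpha_i Y_{t_n}}]$, to the strictly negative rate $-\tfrac{1}{\Delta}\log(1+r_i\Delta)$, giving exponential decay and hence admissibility.

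The main obstacle is a bookkeeping one: lining up the $(1+r_i\Delta)^{1+1/(r_i\Delta)}$ prefactor with the two constraints imposed by the Bellman equation (the envelope identity and the first-order condition) so that both hold simultaneously with a single $K$. There are no serious conceptual hurdles — once the exponential ansatz is chosen the problem is essentially algebraic — but the bookkeeping must be done carefully because the same parameter $\tilde\beta_i$ plays a dual role (as the effective discount rate absorbing the Gaussian cumulants, and as the coefficient in the interior correction to $\hat c$) and small sign errors would break the cancellations.
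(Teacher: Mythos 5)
Your proposal is correct and follows essentially the same route as the paper: an exponential ansatz for the value function, the one-step Bellman/first-order condition yielding $\hat c = r_i x + y + b$, a supermartingale--martingale verification argument, and the transversality condition (checked explicitly for $\hat c_i$ via the decay rate $-\frac{1}{\Delta}\log(1+r_i\Delta)$) to dispose of the terminal term. The only difference is expository: you spell out the algebra pinning down the constant $K=\frac{1}{r_i\Delta}(1+r_i\Delta)^{1+1/(r_i\Delta)}$, which the paper leaves implicit.
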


\subsection{Equilibrium}\label{section:discrete_equilibrium}

The definition of equilibrium must allow us to relate both agents' willingness to trade, even when no trade occurs due to frictions.  Shadow prices provide us with this mechanism and a way to compute the range of annuity values consistent with an equilibrium.  
Typically, shadow prices are used as a tool to establish properties of an original model with frictions; see, for example, \cite{KMK10AAP} and \cite{CS16AAP}.  Here, we work with shadow prices directly, and we subsequently determine transaction cost models consistent with our agents' shadow markets.

\begin{definition}\label{def:discrete_equilibrium}
For the transaction cost parameter $\lambda\in[0,1)$, an \textit{equilibrium with transaction costs} is given by a collection of processes $(A_i, \hat c_i, \hat\theta_i)_{i=1,2}$ such that
\begin{enumerate}[(i)]
  \item \label{def:discrete_clearing}
  Real and financial markets clear for each $n\geq 0$:
    $$
      \sum_{i=1}^2\hat c_{i t_n}\Delta = \Delta +\sum_{i=1}^2 Y_{i t_n}\Delta 
      - 2\lambda\left|\hat\theta_{1 t_{n+1}}-\hat\theta_{1 t_n}\right|A_{t_{n+1}} 
      \ \ \ \text{ and } \ \ \ 
      \hat\theta_{1 t_n} + \hat\theta_{2 t_n} = 1,
    $$
    where in the event of a trade, we define 
    $A_{t_{n+1}}:=\frac{A_{i t_{n+1}}}{1+\lambda}$ 
    if agent $i\in\{1,2\}$ purchases a positive number of annuity shares; 
    that is, $\theta_{i t_{n+1}}-\theta_{i t_n}>0$.
  \item For each agent $i=1,2$, the consumption and investment strategies, 
    $\hat c_i$ and $\hat\theta_i$ with $\hat\theta_{i0}=\theta_{i0}$, are 
    optimal with the shadow annuity price $A_i$:
    $$ 
      V^\Delta_i(\theta_{i0}A_{i0},Y_{i0}) = \sum_{n=0}^\infty \E
      \left[U_i( t_n, \hat c_{i t_n})\right].
    $$
  \item \label{def:discrete_closeness} 
  The shadow markets remain ``close enough'' to the underlying transaction 
  cost market in the following sense:  For each $n\geq 0$,
  $$
    \frac{A_{1 t_n}}{A_{2 t_n}}
    \in\left[\frac{1-\lambda}{1+\lambda},\frac{1+\lambda}{1-\lambda}\right].
  $$
  Moreover, for $n\geq 1$, if $\hat\theta_{1 t_n}-\hat\theta_{1t_{n-1}}>0$ then 
  $A_{1 t_n} = A_{2 t_n}\cdot\frac{1+\lambda}{1-\lambda}$.  
  If $\hat\theta_{1 t_n}-\hat\theta_{1 t_{n-1}}<0$ then $A_{1 t_n} = A_{2 t_n}\cdot\frac{1-\lambda}{1+\lambda}$. 
\end{enumerate}
\end{definition}

\begin{remark}[On Definition \ref{def:discrete_equilibrium} \eqref{def:discrete_closeness}]\label{remark:closeness}
  Let us consider a single-agent optimization 
  problem for a risky asset with frictions $S$ and a shadow price 
  $\tilde S$.  The shadow price along with the optimal trading strategy 
  $\hat\theta$ will satisfy $\tilde S_{t_n} 
  \in [(1-\lambda)S_{t_n}, (1+\lambda)S_{t_n}]$, 
  $\tilde S_{t_n} = (1-\lambda)S_{t_n}$ when $\hat\theta_{t_n}-\hat\theta_{t_{n-1}}<0$, and 
  $\tilde S_{t_n} = (1+\lambda)S_{t_n}$ when $\hat\theta_{t_n}-\hat\theta_{t_{n-1}}>0$.  
  Condition \eqref{def:discrete_closeness} in Definition~\ref{def:discrete_equilibrium} enforces this relationship between 
  both agents' shadow markets and the underlying market.  \label{remark:closeness_explanation}In the absence of condition \eqref{def:discrete_closeness} and when trade does not occur, there is no connection between the shadow annuity markets and the underlying market.  In this case, there are infinitely many no-trade equilibria, even in the frictionless ($\lambda = 0$) case.
\end{remark}

Since each agent optimizes in her own shadow market while maintaining the ``closeness'' condition \eqref{def:discrete_closeness}, a unique market annuity rate is only guaranteed when trade occurs.  When trade does not occur in a given period, there is a range of possible annuity values (and corresponding interest rates) consistent with equilibrium.  \label{explanation:shadow}In \cite{LMW04JPE, HMK17wp}, the authors have a single equilibrium price rather than two shadow prices, but these works need to allow for the agents to pay different transaction costs based on a total exogenous cost.  We are able to avoid this endogenous splitting of costs while maintaining tractability in part because we choose to work with shadow markets.

The following is the main result of the section.  The proof is in Section \ref{section:proofs}.
\begin{theorem}\label{thm:discrete_equilibrium}
  Let $\tilde\beta_i:=\beta_i+\alpha_i\mu_i-\frac{\alpha_i^2\sigma_i^2}{2}$, 
  and assume that $\tilde\beta_i$ is strictly positive for $i=1,2$. 
  For $\lambda\in[0,1)$, there exists an equilibrium with strictly positive 
  constant shadow interest rates $r_1, r_2$ and constant shadow 
  annuity values $A_1=1/r_1$, $A_2=1/r_2$.  The optimal consumption and  
  wealth processes for investor $i=1,2$, are given by 
  \eqref{eqn:discrete_opt_consumption} and \eqref{eqn:discrete_opt_wealth},
  respectively. 
  
  \noindent{\bf Case 1: } A no-trade equilibrium occurs if
  \begin{equation}\label{eqn:discrete_case1}
    \frac{e^{\tilde\beta_2\Delta}-1}{e^{\tilde\beta_1\Delta}-1}\in
    \left[\frac{1-\lambda}{1+\lambda},\frac{1+\lambda}{1-\lambda}\right].
  \end{equation}
  In this case,
  $$
    r_1 = \frac{e^{\tilde\beta_1\Delta}-1}{\Delta} \ \ \ \text{ and } \ \ \ 
    r_2 = \frac{e^{\tilde\beta_2\Delta}-1}{\Delta}.
  $$
  The range of possible constant, non-shadow interest rates that are 
  consistent with this equilibrium is given by $r=(r_{t_n})_{n\geq 0}$ with 
  $$
    r_{t_n} \in\left[\frac{1-\lambda}{\Delta}
    \left(e^{\max(\tilde\beta_1, \tilde\beta_2)\Delta}-1\right), 
    \frac{1+\lambda}{\Delta}\left(e^{\min(\tilde\beta_1,
    \tilde\beta_2)\Delta}-1\right)\right] \neq \emptyset.
  $$
  
  \noindent{\bf Case 2: }
  There exists an equilibrium in which 
  agent $1$ will purchase shares of the annuity 
  in equilibrium at all times $t_n\geq 0$ (while agent $2$ sells shares) if
  \begin{equation}\label{eqn:discrete_case2}
    \frac{e^{\tilde\beta_2\Delta}-1}{e^{\tilde\beta_1\Delta}-1} 
    > \frac{1+\lambda}{1-\lambda},
  \end{equation}
  where the interest rate $r>0$ is uniquely determined by 
  \begin{equation}\label{eqn:discrete_r} 
    \left(1+\frac{r\Delta}{1-\lambda}\right)^{\frac{1}{\alpha_2\Delta}}
    \left(1+\frac{r\Delta}{1+\lambda}\right)^{\frac{1}{\alpha_1\Delta}} 
    = e^{\frac{\tilde\beta_1}{\alpha_1}
    +\frac{\tilde\beta_2}{\alpha_2}},
  \end{equation}
  and the shadow interest rates are given in terms of
  $$
    r = r_1 (1+\lambda) = r_2 (1-\lambda).
  $$
\end{theorem}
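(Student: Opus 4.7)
The approach is to use Theorem~\ref{thm:discrete_individual} to parametrize each agent's problem by a shadow interest rate $r_i>0$, so that $\hat c_i$ and $\hat X_i$ are given by \eqref{eqn:discrete_opt_consumption}--\eqref{eqn:discrete_opt_wealth}. Computing the share process $\hat\theta_{it_n}=r_i\hat X_{it_n}$ from \eqref{eqn:discrete_opt_wealth} shows that it is affine in $t_n$ with constant per-period increment $\hat\theta_{it_{n+1}}-\hat\theta_{it_n}=\tfrac{1}{\alpha_i}\bigl(\log(1+r_i\Delta)-\tilde\beta_i\Delta\bigr)$. Equilibrium then reduces to choosing $(r_1,r_2)$ (and the market rate $r$) so that financial and real market clearing hold at $t_0$ and, by the affine structure, at every $t_n$, together with the closeness condition. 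Shadow-market optimality is then automatic from Theorem~\ref{thm:discrete_individual}; admissibility follows from a direct Gaussian computation reducing the transversality integrand to $\exp\bigl(-\alpha_i\theta_{i0}-\alpha_i Y_{i0}-\tfrac{t_n}{\Delta}\log(1+r_i\Delta)\bigr)$, which decays geometrically since $r_i>0$.

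\textbf{Case 1 (no trade).} The share increment vanishes iff $r_i=(e^{\tilde\beta_i\Delta}-1)/\Delta$. Substituting back into \eqref{eqn:discrete_opt_consumption} collapses consumption to $\hat c_{it_n}=\theta_{i0}+Y_{it_n}$; summing over $i$ gives aggregate consumption $\Delta+\sum_i Y_{it_n}\Delta$ with no friction term (no trade occurs), and financial clearing is immediate from $\theta_{10}+\theta_{20}=1$. The closeness ratio $A_1/A_2=r_2/r_1=(e^{\tilde\beta_2\Delta}-1)/(e^{\tilde\beta_1\Delta}-1)$ lies in $[\tfrac{1-\lambda}{1+\lambda},\tfrac{1+\lambda}{1-\lambda}]$ precisely under \eqref{eqn:discrete_case1}. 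The admissible non-shadow rates come from intersecting $A\in[A_i/(1+\lambda),A_i/(1-\lambda)]$ across $i=1,2$ and inverting; non-emptiness of the resulting interval is again equivalent to \eqref{eqn:discrete_case1}.

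\textbf{Case 2 (agent~1 buys).} Postulating a strict trade at every period, the closeness condition forces $A_1/A_2=(1+\lambda)/(1-\lambda)$, i.e.\ $r_1=r/(1+\lambda)$ and $r_2=r/(1-\lambda)$ for the market rate $r=1/A$. Since $\hat\theta_{1t_0}+\hat\theta_{2t_0}=1$ already holds, financial clearing at every $t_n$ is equivalent to the vanishing of the sum of the per-period increments; rearranging this gives exactly \eqref{eqn:discrete_r}. The left-hand side of \eqref{eqn:discrete_r} is continuous and strictly increasing from $1$ to $\infty$ on $(0,\infty)$, while the right-hand side exceeds $1$ since $\tilde\beta_i>0$, so a unique $r>0$ exists. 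To verify that this $r$ actually makes agent~1 a strict buyer (and agent~2 a strict seller), evaluate the LHS of \eqref{eqn:discrete_r} at the two boundary values $r=(1+\lambda)(e^{\tilde\beta_1\Delta}-1)/\Delta$ and $r=(1-\lambda)(e^{\tilde\beta_2\Delta}-1)/\Delta$; a short algebraic manipulation shows the value lies below (respectively, above) the RHS precisely under \eqref{eqn:discrete_case2}, so the unique root sits strictly inside the buy-sell window.

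\textbf{Main obstacle.} The delicate step is real-market clearing in Case~2. Summing the self-financing identity \eqref{eqn:discrete_sf} over both agents and using financial clearing to write the agent~2 increment as the negative of the agent~1 increment, the trading terms telescope to $(A_{1t_{n+1}}-A_{2t_{n+1}})(\hat\theta_{1t_{n+1}}-\hat\theta_{1t_n})$. Because the closeness condition forces the shadow-price spread to equal the bid-ask spread $A_1-A_2=2\lambda A$, this expression matches exactly the friction term $2\lambda|\hat\theta_{1t_{n+1}}-\hat\theta_{1t_n}|A_{t_{n+1}}$ required by market clearing. The whole argument hinges on this algebraic alignment, which is available only because the shadow rates have been pinned at $r/(1\pm\lambda)$ rather than at any other pair consistent with \eqref{def:discrete_closeness}.
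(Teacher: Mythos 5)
Your proposal is correct and follows essentially the same route as the paper: reduce the equilibrium to a system in the shadow rates $(r_1,r_2)$ via the constant per-period share increment $\frac{1}{\alpha_i}\left(\log(1+r_i\Delta)-\tilde\beta_i\Delta\right)$, solve Case~1 by the no-trade fixed points $r_i=(e^{\tilde\beta_i\Delta}-1)/\Delta$, and solve Case~2 by strict monotonicity of the left-hand side of \eqref{eqn:discrete_r}, with your boundary-value check locating the root in the buy--sell window exactly as the paper's Proposition on $F_i$ does. The only organizational difference is that you obtain \eqref{eqn:discrete_r} from financial clearing and then verify real-goods clearing separately via the telescoping spread identity $A_1-A_2=2\lambda A$, whereas the paper bundles both clearing conditions and the closeness condition into the single equation $F_1(r_1)+F_2(r_2)=\lambda\left(|F_1(r_1)|+|F_2(r_2)|\right)$; the two formulations are algebraically equivalent.
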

\label{discussion:betas}The parameters $\tilde\beta_1$ and $\tilde\beta_2$ represent the agents' time preference parameters for consumption adjusted for risk and income.  Strictly positive $\tilde\beta_i$ parameters correspond to an economy that allows for strictly positive equilibrium shadow interest rates.  Strictly positive shadow interest rates, in turn, ensure that the shadow annuity values are well-defined.  Allowing even one of the $\tilde\beta_i$ parameters to cross zero would cause the corresponding shadow annuity to be infinitely valued.  Financially, this case corresponds to the case when a (zero transaction cost) shadow annuity with a constant interest rate cannot be replicated by a bank account because of the bank account's dwindling value.

When the agents' parameters $\tilde\beta_1$ and $\tilde\beta_2$ are sufficiently close, as in \eqref{eqn:discrete_case1}, then the agents are not motivated to trade because of the relatively high transaction costs.  Trading occurs in equilibrium only when the parameters $\tilde\beta_1$ and $\tilde\beta_2$ are sufficiently far apart to overcome the transaction costs.  In this case, the agents' strategies are very simple:  either buy or sell the exact same amount at every time period.
\begin{remark}
  If the inequality \eqref{eqn:discrete_case2} is flipped so that
  $$
    \frac{e^{\tilde\beta_1\Delta}-1}{e^{\tilde\beta_2\Delta}-1} 
    > \frac{1+\lambda}{1-\lambda},
  $$
  then we can conclude an analogous result in which the roles of agent 
  $1$ and $2$ are interchanged.
\end{remark}

\section{Continuous-Time Equilibrium}\label{section:cts}
%
In our simple setting, the presence of only one traded security with constant dividends allows for an optimal continuous-time trading strategy that is absolutely continuous with respect to the Lebesgue measure ($dt$), even though Brownian noise enters the economy through the income streams.  Since consumption occurs on the $dt$-time scale, the absolute continuity property will allow transaction costs to be paid on the same $dt$-time scale in the real goods market.

In this section, we consider the continuous-time infinite time horizon Radner equilibrium.  The only traded security is an annuity $A$, which is in one-net supply and available to trade with the proportional transaction cost rate $\lambda\in[0,1)$.  The risk-free rate $r>0$ will be determined endogenously in equilibrium using the equilibrium annuity values.  We again focus on equilibria allowing for constant, positive interest rates, in which case, the annuity value will be the constant $A=1/r$.

Each of the two agents has an exogenous income stream given by $Y_i=(Y_{it})_t$, $i=1,2$, with dynamics 
$$
  dY_{it} = \mu_i dt + \sigma_i dB_{it}, \ \ \  Y_{i0}\in\R,
$$
where $\mu_i\in\R$, $\sigma_i>0$, and $B_1$ and $B_2$ are possibly correlated Brownian motions.  
The agents are also endowed with an initial allocation of shares in the annuity $\theta_{i0}\in\R$ such that $\theta_{10}+\theta_{20} = 1$.

The flow of information in the economy is given by $\bF=(\sF_t)_{t\geq 0}$, where $\sF_t=\sigma(B_{1u},B_{2u}: 0\leq u\leq t)$.  All process are assumed to be adapted to $\bF$, and all agents share the same filtration.

\subsection{Individual Agent Problems}

We consider the single agent investment and consumption problem for agent $i$'s shadow market, $i\in\{1,2\}$. We focus on models where the value of the shadow annuity and shadow interest rate (to be determined endogenously in equilibrium) are constants $A_{i t}=A_i=1/r_i$ and $r_i>0$, respectively.

For a given investment strategy $\theta$, agent $i$'s shadow wealth is defined by $X_{i t} := \theta_{t} A_{i t}$.  For a measurable, adapted consumption process $c=(c_t)_t$ for which $\int_0^T |c_t| dt<\infty$ $\bP$-almost surely for all $T>0$,  the shadow wealth process associated with $c$ evolves like
$$
  dX^c_{it} = \left(X^c_{it}r_i - c_t + Y_{it}\right)dt, 
  \ \ \ X^c_{i0}=\theta_{i0}/r_i \in\R.
$$

As in the discrete-time case, we consider agents with exponential preferences over running consumption with risk aversion $\alpha_i>0$ and time-preference parameter $\beta_i>0$.

\begin{definition}
  Let $i\in\{1,2\}$.  A consumption process $c=(c_t)_t$ is called 
  \textit{admissible for agent $i$} if 
  the transversality condition holds:
  $$
    \lim_{t\rightarrow\infty} \E\left[e^{-\beta_i t-\alpha_i r_i X^c_t-\alpha_i Y_{it}} \right] = 0.
  $$
  In this case, we write $c\in\sA_i$.
\end{definition}

For $i\in\{1,2\}$, agent $i$'s value function is given by
$$
  V_i(x,y) := \sup_{c\in\sA_i} \E \int_0^\infty U_i(t, c_t)dt, 
  \ \ \  x,y\in\R.
$$ 
We show in Theorem \ref{thm:cts_individual_agent} (below) that $V_i=J_i$, where
\begin{equation}\label{eqn:cts_J}
  J_i(x,y) = -\frac{1}{r_i}\exp\left(-\alpha_ir_ix - \alpha_iy 
  + 1 - \frac{\tilde\beta_i}{r_i}\right).
\end{equation}
We note that $V^\Delta_i(x,y)\Delta \longrightarrow J_i(x,y)$ as $\Delta\rightarrow 0$, where $V^\Delta_i$ is the discrete-time value function defined by \eqref{eqn:discrete_value_function}.

The following result establishes the individual agent optimal investment strategies.  The proof is omitted, as it does not vary substantially from the discrete-time case.
\begin{theorem}\label{thm:cts_individual_agent}
  For $i=1,2$, let 
  $\tilde\beta_i:=\beta_i+\alpha_i\mu_i-\frac{\alpha_i^2\sigma_i^2}{2}$.  
  The optimal consumption policy and wealth process for agent 
  $i$ 
  are given by
\begin{gather}
  \hat c_{it} = r_i\hat X_{it} + Y_{it} +\frac{\tilde\beta_i}{r_i\alpha_i} 
    -\frac{1}{\alpha_i}, \label{eqn:cts_consumption}\\
  \hat X_{it} = X^{\hat c_i}_t 
    = \frac{\theta_{i0}}{r_i} 
    + \frac{1}{\alpha_i}\left(1-\frac{\tilde\beta_i}{r_i}\right)t.
    \label{eqn:cts_wealth}
\end{gather}
Moreoever, the value function coincides with \eqref{eqn:cts_J}; that is, $V_i = J_i$.
\end{theorem}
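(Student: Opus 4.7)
The plan is to carry out a standard HJB-based verification for this stochastic control problem, exploiting the explicit ansatz $J_i$ that is already written in \eqref{eqn:cts_J} and that is the natural continuous-time analogue of the discrete-time value function in Theorem~\ref{thm:discrete_individual}. Since the state is $(X^c_{it}, Y_{it})$ with $X^c$ a deterministic ODE driven by $(c, Y)$ and $Y$ a Brownian motion with drift $\mu_i$ and diffusion $\sigma_i$, the associated HJB equation on $\R^2$ reads
$$
  \beta_i V(x,y) = \sup_{c\in\R}\Bigl\{-e^{-\alpha_i c} + V_x(x,y)(r_i x - c + y) + \mu_i V_y(x,y) + \tfrac{1}{2}\sigma_i^2 V_{yy}(x,y)\Bigr\}.
$$
The first-order condition in $c$ gives the candidate feedback $c^\ast = -\tfrac{1}{\alpha_i}\log(V_x/\alpha_i)$. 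Substituting the ansatz $J_i$ from \eqref{eqn:cts_J}, one computes $(J_i)_x = -\alpha_i r_i J_i>0$, so $c^\ast = r_i x + y + \tilde\beta_i/(r_i\alpha_i) - 1/\alpha_i$, which matches \eqref{eqn:cts_consumption}. A direct calculation with $J_i$, $\partial_x J_i$, $\partial_y J_i$, and $\partial_{yy} J_i$ then verifies that $J_i$ solves the HJB PDE exactly when the constant in the exponent equals $1 - \tilde\beta_i/r_i$.

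Second, I would prove optimality via Itô's formula applied to $e^{-\beta_i t}J_i(X^c_t,Y_{it})$ for an arbitrary $c\in\sA_i$. The drift equals $e^{-\beta_i t}[-\beta_i J_i + (J_i)_x(r_i X^c_t - c_t + Y_{it}) + \mu_i (J_i)_y + \tfrac12\sigma_i^2 (J_i)_{yy}]$, which by the HJB inequality is bounded above by $e^{-\beta_i t}e^{-\alpha_i c_t}$, with equality precisely when $c_t$ equals the candidate feedback $\hat c_{it}$. Thus
$$
  M^c_t := e^{-\beta_i t} J_i(X^c_t,Y_{it}) + \int_0^t e^{-\beta_i s}\bigl(-e^{-\alpha_i c_s}\bigr)\,ds
$$
is a local supermartingale, and a local martingale when $c = \hat c_i$. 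Localizing by a sequence $\tau_n\uparrow\infty$, taking expectations, applying Fatou's lemma on the (nonpositive) utility integral, and using the transversality condition in the definition of $\sA_i$ to send $\E[e^{-\beta_i t}J_i(X^c_t,Y_{it})]\to 0$, I obtain $J_i(x,y) \geq -\E\int_0^\infty e^{-\beta_i t - \alpha_i c_t}\,dt$, so $J_i \geq V_i$, with equality realized by $\hat c_i$.

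Third, I would derive the explicit wealth formula \eqref{eqn:cts_wealth} and verify admissibility of $\hat c_i$. Substituting $\hat c_{it} = r_i\hat X_{it} + Y_{it} + \tilde\beta_i/(r_i\alpha_i) - 1/\alpha_i$ into the wealth ODE collapses the $r_i\hat X_{it}$ and $Y_{it}$ terms, leaving $d\hat X_{it} = \tfrac{1}{\alpha_i}\bigl(1 - \tilde\beta_i/r_i\bigr)dt$, which integrates to \eqref{eqn:cts_wealth} with $\hat X_{i0}=\theta_{i0}/r_i$. The transversality check for $\hat c_i$ then reduces to showing that $\E[\exp(-\beta_i t - \alpha_i r_i \hat X_{it} - \alpha_i Y_{it})]\to 0$, where the exponent is affine-in-$t$ plus the Gaussian $-\alpha_i Y_{it}$; computing the moment generating function of $Y_{it}$ yields a net deterministic exponential rate equal to $-\tilde\beta_i/r_i \cdot r_i = -\tilde\beta_i<0$ (after cancellation of the $\mu_i$ and $\sigma_i^2$ contributions against the definition of $\tilde\beta_i$), giving the required decay.

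The main technical point is the interplay between the local-supermartingale bound and the infinite horizon: one must justify passing from $M^c_{t\wedge\tau_n}$ to $M^c_\infty$ by combining localization, monotone/Fatou arguments on the nonpositive integrand $-e^{-\alpha_i c_t}$, and the specific form of the transversality condition (which is tailored precisely to make $\E[e^{-\beta_i t}J_i(X^c_t,Y_{it})]\to 0$). Everything else is a direct computation closely paralleling the discrete-time proof of Theorem~\ref{thm:discrete_individual}, which is why the paper omits the details.
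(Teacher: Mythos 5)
Your proposal is correct and follows essentially the route the paper intends: the paper omits this proof precisely because it is the continuous-time (HJB/It\^o) transplant of the discrete-time dynamic-programming verification for Theorem~\ref{thm:discrete_individual}, which is exactly what you carry out, including the local-supermartingale/martingale decomposition, localization, Fatou, and the use of transversality to kill the terminal term. One small arithmetic slip in your admissibility check: the net deterministic decay rate of $\E\left[e^{-\beta_i t-\alpha_i r_i\hat X_{it}-\alpha_i Y_{it}}\right]$ is $-r_i$, not $-\tilde\beta_i$ (the $\tilde\beta_i$ contributions from $\hat X$ cancel against $\beta_i+\alpha_i\mu_i-\tfrac{\alpha_i^2\sigma_i^2}{2}$, leaving $e^{-r_i t}$, consistent with the factor $(1+r_i\Delta)^{-n}$ in the discrete-time proof); the conclusion is unaffected since $r_i>0$, and in fact this shows transversality needs no sign assumption on $\tilde\beta_i$.
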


\subsection{Equilibrium in Continuous-Time}\label{section:cts-eq}
In addition to establishing the existence of an equilibrium, we are interested in how the equilibrium interest rate depends on $\lambda$.

\begin{definition}
For the transaction cost parameter $\lambda\in[0,1)$, an \textit{equilibrium with transaction costs} is given by a collection of processes $(A_i, \hat c_i, \hat\theta_i)_{i=1,2}$ such that
\begin{enumerate}[(i)]
  \item For $i=1,2$, the optimal investment strategy 
    $\hat\theta_i$ is differentiable in time with derivative 
    $\hat\theta_{it}'$.
  \item Real and financial markets clear for all $t\geq 0$:
    $$
      \sum_{i=1}^2\hat c_{i t} = 1 +\sum_{i=1}^2 Y_{it} 
      - 2\lambda\left| \hat\theta_{1t}' \right|A_{t} 
      \ \ \ \text{ and } \ \ \ 
      \hat\theta_{1t} + \hat\theta_{2 t} = 1,
    $$
    where in the event of a trade, we define $A_t:=\frac{A_{it}}{1+\lambda}$ 
    if agent $i\in\{1,2\}$ purchases a positive number of annuity shares; i.e.,  
    $\hat\theta_{it}'>0$.
  \item For each agent $i=1,2$, the consumption and investment strategies, 
  $\hat c_i$ and $\hat\theta_i$ with $\hat\theta_{i0}=\theta_{i0}$, are 
  optimal with the annuity price $A_i$:
    $$
      V_i(\theta_{i0}A_{i0}, Y_{i0}) = \int_{0}^\infty \E
      \left[U_i(t, \hat c_{i t})\right]dt.
    $$
  \item \label{def:cts_closeness} 
  The shadow markets remain ``close enough'' to the underlying transaction 
  cost market in the following sense:  For all $t\geq 0$,
  $$
    \frac{A_{1t}}{A_{2t}}\in\left[\frac{1-\lambda}{1+\lambda},\frac{1+\lambda}{1-\lambda}\right].
  $$
  Moreover, if $\hat\theta_{1t}'>0$ then $A_{1t} = A_{2t}\cdot\frac{1+\lambda}{1-\lambda}$.  
  If $\hat\theta_{1t}'<0$ then $A_{1t} = A_{2t}\cdot\frac{1-\lambda}{1+\lambda}$. 
\end{enumerate}
\end{definition}
\begin{remark}
  When trade occurs, we are able to define $A_t := \frac{A_{it}}{1+\lambda}$, where agent $i\in\{1,2\}$ purchases a positive number of shares, $\hat\theta'_{it}>0$.  In this case, if we consider constant interest rate equilibria, then the equilibrium interest rate is uniquely determined by $r = 1/A = 1/A_t$.
\end{remark}

The following result establishes an equilibrium for the continuous-time model. The proof is omitted, as it mirrors the proof of Theorem \ref{thm:discrete_equilibrium}.
\begin{theorem}\label{thm:cts_eq}
  Let $\tilde\beta_i:=\beta_i+\alpha_i\mu_i-\frac{\alpha_i^2\sigma_i^2}{2}$, 
  and assume that $\tilde\beta_i$ is strictly positive for $i=1,2$. 
  For $\lambda\in[0,1)$, there exists an equilibrium with strictly positive 
  constant shadow interest rates $r_1, r_2$ and constant shadow annuity 
  values $A_1=1/r_1$, $A_2=1/r_2$.  
  The optimal consumption policies and wealth processes are 
  given by \eqref{eqn:cts_consumption} and \eqref{eqn:cts_wealth}, 
  respectively.

  \noindent{\bf Case 1:}  A no-trade equilibrium occurs if 
  \begin{equation}\label{cts:case1}
    \frac{\tilde\beta_2}{\tilde\beta_1} 
    \in\left[\frac{1-\lambda}{1+\lambda}, \frac{1+\lambda}{1-\lambda}\right].
  \end{equation}
  In this case, $r_1 = \tilde\beta_1$ and $r_2=\tilde\beta_2$. The 
  range of possible (non-shadow) interest rates that are consistent with 
  this equilibrium is $r=(r_{t})_{t\geq 0}$ where 
  $r_{t}\in[(1-\lambda)\max(\tilde\beta_1,\tilde\beta_2),
    (1+\lambda)\min(\tilde\beta_1,\tilde\beta_2)] \neq \emptyset$.

  \noindent{\bf Case 2:}  There exists 
  an equilibrium in which agent $1$ will purchase shares of the annuity in equilibrium at all times $t\geq 0$ (while agent $2$ sells shares) if
  \begin{equation}\label{cts:case2} 
    \frac{\tilde\beta_2}{\tilde\beta_1} > \frac{1+\lambda}{1-\lambda}.
  \end{equation}
  In this case, the interest rate $r>0$ is determined by
  \begin{equation}\label{eqn:cts_rate}
    r = \frac{\tilde\beta_1/\alpha_1+\tilde\beta_2/\alpha_2}
    {\frac{1}{\alpha_1(1+\lambda)}+\frac{1}{\alpha_2(1-\lambda)}}.
  \end{equation}
  The shadow interest rates are given by $r = (1+\lambda)r_1=(1-\lambda)r_2$. 
\end{theorem}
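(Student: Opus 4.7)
The plan is to mirror the structure of the discrete-time proof of Theorem~\ref{thm:discrete_equilibrium}: in each case I guess candidate constant shadow rates $r_1,r_2>0$, read off the candidate optimal consumption and investment from Theorem~\ref{thm:cts_individual_agent}, and verify all four equilibrium conditions directly. The key preliminary observation is that $\hat\theta_{it}=r_i\hat X_{it}$ combined with \eqref{eqn:cts_wealth} yields the affine trajectory
$$
\hat\theta_{it}=\theta_{i0}+\frac{r_i-\tilde\beta_i}{\alpha_i}\,t,
\qquad
\hat\theta_{it}'=\frac{r_i-\tilde\beta_i}{\alpha_i},
$$
so the sign of $r_i-\tilde\beta_i$ determines whether agent $i$ buys, sells, or holds, and this sign must be compatible with the closeness condition.

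For Case~1 I take $r_i=\tilde\beta_i$, which makes $\hat\theta_{it}$ constant, so financial clearing $\hat\theta_{1t}+\hat\theta_{2t}=\theta_{10}+\theta_{20}=1$ is automatic and the transaction-cost term in real clearing vanishes. Summing \eqref{eqn:cts_consumption} gives
$$
\sum_{i=1}^2 \hat c_{it}=\sum_{i=1}^2 r_i\hat X_{it}+\sum_{i=1}^2 Y_{it}+\sum_{i=1}^2\Bigl(\frac{\tilde\beta_i}{r_i\alpha_i}-\frac{1}{\alpha_i}\Bigr)=1+\sum_{i=1}^2 Y_{it},
$$
since the last sum vanishes at $r_i=\tilde\beta_i$. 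The closeness ratio $A_{1t}/A_{2t}=\tilde\beta_2/\tilde\beta_1$ lies in $[\tfrac{1-\lambda}{1+\lambda},\tfrac{1+\lambda}{1-\lambda}]$ precisely under \eqref{cts:case1}. The stated range of non-shadow rates follows from $A_i\in[(1-\lambda)A,(1+\lambda)A]$, i.e.\ $r\in[(1-\lambda)r_i,(1+\lambda)r_i]$ for both $i$.

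For Case~2 the closeness condition, under the assumption that agent~$1$ is a continual buyer, forces $r=(1+\lambda)r_1=(1-\lambda)r_2$. Substituting into the financial clearing condition $\hat\theta_{1t}'+\hat\theta_{2t}'=0$ (equivalent to $\hat\theta_{1t}+\hat\theta_{2t}\equiv 1$ given the initial condition) yields the single linear equation
$$
r\Bigl(\frac{1}{\alpha_1(1+\lambda)}+\frac{1}{\alpha_2(1-\lambda)}\Bigr)
=\frac{\tilde\beta_1}{\alpha_1}+\frac{\tilde\beta_2}{\alpha_2},
$$
which determines $r$ as in \eqref{eqn:cts_rate}. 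I would then verify that the sign requirements $r_1>\tilde\beta_1$ and $r_2<\tilde\beta_2$ (i.e.\ $\hat\theta_{1t}'>0$ and $\hat\theta_{2t}'<0$) follow exactly from hypothesis \eqref{cts:case2}. A short algebraic manipulation shows real clearing holds: multiplying the residual $\sum_i(\tilde\beta_i-r_i)/(r_i\alpha_i)$ by $r$ and using $r=(1+\lambda)r_1=(1-\lambda)r_2$ simplifies it to $-2\lambda\hat\theta_{1t}'/r$, matching the transaction-cost drain $-2\lambda|\hat\theta_{1t}'|A_t$.

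The main obstacle, as in the discrete-time proof, is the self-consistency between the three scalar constraints (closeness, financial clearing, and real clearing) and the single free parameter $r$. The identity $\hat\theta_{it}=r_i\hat X_{it}$ is what forces compatibility: it functions as an analogue of Walras' law that precisely accounts for the consumption-good flow lost to transaction costs, so only one independent equation for $r$ actually remains. Its solvability and positivity in Case~2 follow immediately from $\tilde\beta_i>0$ and hypothesis \eqref{cts:case2}, and admissibility of the candidate $\hat c_i$ in each agent's shadow market is inherited from Theorem~\ref{thm:cts_individual_agent}.
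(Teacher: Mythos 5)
Your proposal is correct and follows essentially the same route as the paper, which omits this proof precisely because it mirrors the discrete-time argument: you use the continuous-time analogue $F_i(r)=\frac{1}{\alpha_i}\bigl(1-\frac{\tilde\beta_i}{r}\bigr)$ (so $\hat\theta_{it}'=r_iF_i(r_i)=\frac{r_i-\tilde\beta_i}{\alpha_i}$), impose clearing together with the closeness condition, and solve the resulting scalar equation for $r$. Your observation that financial clearing and real-goods clearing collapse to the same equation once $r=(1+\lambda)r_1=(1-\lambda)r_2$ is imposed is exactly the consistency that makes the paper's system \eqref{eqn:clearing_cond_const}--\eqref{eqn:shadow_constraint_const} well-posed, and the sign checks $r_1>\tilde\beta_1$, $r_2<\tilde\beta_2$ follow since \eqref{eqn:cts_rate} exhibits $r$ as a convex combination of $(1+\lambda)\tilde\beta_1$ and $(1-\lambda)\tilde\beta_2$.
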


The agents behave similarly in a continuous-time equilibrium as in discrete time.  When the agents' income-adjusted time preference parameters $\tilde\beta_1$ and $\tilde\beta_2$ are sufficiently close, as in \eqref{cts:case1}, then the agents are not motivated to trade because of the relatively high transaction costs.  Their shadow interest rates reflect their individual frictionless view of the market and do not differ significantly enough to encourage trade.

Trading occurs in equilibrium only when the parameters $\tilde\beta_1$ and $\tilde\beta_2$ are sufficiently far apart to overcome the transaction costs.  In \eqref{cts:case2}, agent 1 values the annuity  more (with a lower shadow interest rate) than agent 2, which encourages her to acquire shares in the annuity.  The agents' strategies are very simple:  either buy or sell at the same rate for all times.

Theorem~\ref{thm:discrete_to_cts} proves that the discrete-time interest rate passes to the continuous-time equilibrium rate as the time step $\Delta$ tends to zero.
\begin{theorem}\label{thm:discrete_to_cts}
  Suppose that
  $$
    \frac{\tilde\beta_2}{\tilde\beta_1}>\frac{1+\lambda}{1-\lambda}. 
  $$
  Let $r(\Delta)$ be the solution to 
  \eqref{eqn:discrete_r} corresponding to the time step $\Delta>0$, 
  and let $r(0)$ be the continuous-time equilibrium interest rate given by 
  \eqref{eqn:cts_rate}.  
  Then $r(\Delta)>0$ is the unique interest rate among constant interest rate 
  equilibria for sufficiently small $\Delta$, and 
  $r(\Delta) \longrightarrow r(0)$ as $\Delta\rightarrow 0$.
\end{theorem}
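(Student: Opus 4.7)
The plan is to reduce \eqref{eqn:discrete_r} to a problem about a continuous, monotonic function by taking logarithms, then apply standard implicit-function reasoning to pass $\Delta\to 0$. Specifically, define for $r\geq 0$ and $\Delta>0$,
$$F(r,\Delta) := \frac{1}{\alpha_2\Delta}\log\!\left(1+\tfrac{r\Delta}{1-\lambda}\right) + \frac{1}{\alpha_1\Delta}\log\!\left(1+\tfrac{r\Delta}{1+\lambda}\right),$$
and extend to $\Delta=0$ by $F(r,0):=\frac{r}{\alpha_2(1-\lambda)}+\frac{r}{\alpha_1(1+\lambda)}$. Then \eqref{eqn:discrete_r} becomes $F(r,\Delta)=\tilde\beta_1/\alpha_1+\tilde\beta_2/\alpha_2$, while \eqref{eqn:cts_rate} is the same equation evaluated at $\Delta=0$.

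First I would verify that for each $\Delta\geq 0$ there is a unique solution $r(\Delta)>0$. Direct differentiation yields
$$\partial_r F(r,\Delta) = \frac{1}{\alpha_2(1-\lambda+r\Delta)}+\frac{1}{\alpha_1(1+\lambda+r\Delta)}>0,$$
together with $F(0,\Delta)=0$ and $F(r,\Delta)\to\infty$ as $r\to\infty$, so unique solvability on $(0,\infty)$ follows from monotonicity and the intermediate value theorem. Next I would invoke the Taylor expansion $\log(1+x)=x+O(x^2)$, valid uniformly for $x$ in any bounded set, to conclude that $F$ is jointly continuous at every point $(r,0)$; this is the crucial observation, because a naive look at \eqref{eqn:discrete_r} presents the indeterminate form $(1+0)^\infty$.

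Uniqueness among constant interest rate equilibria for small $\Delta$ reduces to verifying that Case 2 of Theorem~\ref{thm:discrete_equilibrium} applies while Case 1 does not. Since $(e^{\tilde\beta_i\Delta}-1)/\Delta\to\tilde\beta_i>0$, the hypothesis $\tilde\beta_2/\tilde\beta_1>(1+\lambda)/(1-\lambda)$ passes by continuity to the strict discrete inequality for the ratio $(e^{\tilde\beta_2\Delta}-1)/(e^{\tilde\beta_1\Delta}-1)$ whenever $\Delta$ is sufficiently small. Thus the equilibrium falls strictly in Case 2, and $r(\Delta)$ is pinned down by \eqref{eqn:discrete_r}.

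Finally, for the convergence, given $\eps>0$ the strict monotonicity of $F(\cdot,0)$ yields $F(r(0)-\eps,0)<F(r(0),0)<F(r(0)+\eps,0)$, and the joint continuity established above lets me choose $\delta>0$ so that the same inequalities hold with $F(\cdot,\Delta)$ in place of $F(\cdot,0)$ whenever $\Delta<\delta$; monotonicity in $r$ then forces $r(\Delta)\in(r(0)-\eps,r(0)+\eps)$, so $r(\Delta)\to r(0)$. The main obstacle is conceptual rather than computational: because the exponents $1/(\alpha_i\Delta)$ blow up in \eqref{eqn:discrete_r}, one cannot analyze the limit directly, and the entire argument hinges on first taking logarithms to expose the Taylor cancellation that absorbs the $1/\Delta$ factor.
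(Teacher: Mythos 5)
Your proposal is correct, and its overall skeleton matches the paper's: both arguments first observe that $(e^{\tilde\beta_2\Delta}-1)/(e^{\tilde\beta_1\Delta}-1)\to\tilde\beta_2/\tilde\beta_1$ forces the trading case (Case 2) for all sufficiently small $\Delta$, so that $r(\Delta)$ is pinned down uniquely by \eqref{eqn:discrete_r}, and both then extend the defining function continuously to $\Delta=0$ so that \eqref{eqn:cts_rate} appears as the $\Delta=0$ instance of the same equation. Where you diverge is the final limiting step. The paper works with $G(\Delta,r)$ equal to the left-hand side of \eqref{eqn:discrete_r}, asserts that $G$ is smooth on $[0,\infty)\times(0,\infty)$ with $\partial G/\partial r(0,r(0))\neq 0$, and invokes the implicit function theorem; you instead take logarithms, prove strict monotonicity of $F(\cdot,\Delta)$ by direct differentiation, establish joint continuity at $(r,0)$ via the Taylor expansion of $\log(1+x)$, and squeeze $r(\Delta)$ into $(r(0)-\eps,r(0)+\eps)$. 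Your route is more elementary and has the virtue of making explicit the one delicate analytic fact --- that the $(1+O(\Delta))^{1/\Delta}$ indeterminacy is resolved by the cancellation in $\frac{1}{\Delta}\log(1+r\Delta/c)\to r/c$ --- which the paper compresses into the unverified claim that $G$ is smooth at $\Delta=0$. What the paper's implicit-function-theorem route buys in exchange is slightly more information, namely that $\Delta\mapsto r(\Delta)$ is differentiable at $\Delta=0$ (hence a first-order rate of convergence), which your sandwich argument does not yield. One small point worth making explicit in your write-up: the positivity of the right-hand side $\tilde\beta_1/\alpha_1+\tilde\beta_2/\alpha_2$, which your intermediate value theorem step needs, comes from the standing assumption $\tilde\beta_i>0$ inherited from Theorems~\ref{thm:discrete_equilibrium} and~\ref{thm:cts_eq}.
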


\begin{remark}
When $\frac{\tilde\beta_1}{\tilde\beta_2}>\frac{1+\lambda}{1-\lambda}$, we can conclude analogous results to Theorem~\ref{thm:cts_eq} Case 2 and Theorem~\ref{thm:discrete_to_cts} in which the roles of agent $1$ and $2$ are interchanged.  An analogous result to Corollary~\ref{cor:small_lambda} holds for $\tilde\beta_1>\tilde\beta_2>0$.
\end{remark}

\section{Effects of Transaction Costs}\label{section:effects}
In this section, we analyze the effects of transaction costs on the equilibrium interest rate and agent welfare.  Our method of solving for an equilibrium is the same for zero and non-zero transaction costs, which allows us to easily compare the endogenous interest rate as $\lambda$ varies.  The simplicity of the continuous-time limiting model lends itself to further study as the transaction costs tend to zero.   Even in this stylized model, the equilibrium interest rate is impacted by frictions in a non-trivial way and is not always monotonic.

Zero transaction costs and traded randomness from the stochastic income streams lead to complete markets and Pareto optimal equilibrium allocations.  In our model, there is no market for the risk associated with the agents' stochastic income streams.  Consequently, the equilibrium allocation is non-Pareto optimal, even when transaction costs are zero.  In Section~\ref{section:welfare} below, we use certainty equivalents to compare the welfare loss due to  transaction costs and unspanned income in equilibrium, similar to \cite{D16wp}.  Both types of incompleteness lead to a welfare loss.

\subsection{Interest Rate Effects}\label{section:rate-effects}
For $i=1,2$, we recall that the risk- and income-adjusted time preference parameters are given by $\tilde\beta_i:=\beta_i+\alpha_i\mu_i-\frac{\alpha_i^2\sigma_i^2}{2}$.  When $\tilde\beta_1$ and $\tilde\beta_2$ differ and are strictly positive, then Theorem~\ref{thm:cts_eq} establishes the existence of a continuous-time equilibrium in which trade will occur for sufficiently small transaction costs.  We define a transaction cost threshold $\hat\lambda\in\R$ by 
$$
  \hat\lambda := \frac{\left(\sqrt{\alpha_2}-\sqrt{\alpha_1}\right)^2}{\alpha_2-\alpha_1}.
$$
Corollary \ref{cor:small_lambda} describes the behavior of the equilibrium interest rate in the case when trade occurs.  The proof is contained in Section~\ref{section:proofs}.

\begin{corollary}[The Effects of Small Transaction Costs]\label{cor:small_lambda}
Suppose that $\tilde\beta_2>\tilde\beta_1>0$.  For $\lambda\in [0,\frac{\tilde\beta_2-\tilde\beta_1}{\tilde\beta_1+\tilde\beta_2})$, the equilibrium interest rate exists, is unique among constant interest rate equilibria, and has the explicit form given by 
$$
  r = r(\lambda) 
  = \frac{\tilde\beta_1/\alpha_1+\tilde\beta_2/\alpha_2}
    {\frac{1}{\alpha_1(1+\lambda)}+\frac{1}{\alpha_2(1-\lambda)}}.$$
The transaction cost threshold $\hat\lambda$ is strictly positive when $\alpha_1<\alpha_2$.  In this case, $r$ is strictly increasing on $\left(0,\min\left(\frac{\tilde\beta_2-\tilde\beta_1}{\tilde\beta_1+\tilde\beta_2},\hat\lambda\right)\right)$ and strictly decreasing on $\left(\min\left(\frac{\tilde\beta_2-\tilde\beta_1}{\tilde\beta_1+\tilde\beta_2},\hat\lambda\right), \frac{\tilde\beta_2-\tilde\beta_1}{\tilde\beta_1+\tilde\beta_2}\right)$.  The equilibrium interest rate $r$ is strictly decreasing when $\alpha_1\geq\alpha_2$, in which case $\hat\lambda\leq 0$.
\end{corollary}

The impact of transaction costs on the equilibrium interest rate depends on the configuration of agent risk aversions $\alpha_i$ and risk- and income-adjusted time preference parameters $\tilde\beta_i$.  An agent with a large $\tilde\beta_i$ is a planner:  For small enough transaction costs, she will choose to forgo consumption now in order to invest in shares of the annuity to ensure a future consumption stream.  An agent with a small $\tilde\beta_i$ is a consumer:  For small enough transaction costs, she will sell annuity shares in order to consume today, even though she will miss out on the future dividend stream.

In addition to the adjusted time preference parameters, the agents' risk aversion levels determine the effect of transaction costs on the equilibrium interest rate and welfare levels. 
Table~\ref{table:regimes} below outlines the four possible agent configurations  that assist in defining two equilibrium regimes.

\begin{tablebox}

\begin{center}
  \begin{tabular}{ c | c | c |}\label{table:regimes}
      & small risk              & large risk \\
      & aversion,  $\alpha_i>0$ & aversion,  $\alpha_i>0$ \\ \hline
    small adjusted & \textbf{Aggressive Planner} & \textbf{Reserved Planner} \\
    time preferences, & invests in the annuity & invests in the annuity \\
    $\tilde\beta_i>0$ & prefers future consumption & prefers future consumption \\
      & most risk seeking & most risk averse \\ \hline
    large adjusted & \textbf{Aggressive Consumer} & \textbf{Reserved Consumer} \\
    time preferences, & sells the annuity & \text{sells the annuity} \\
    $\tilde\beta_i>0$ & prefers consumption today & prefers consumption today \\
      & most risk seeking & most risk averse \\ \hline
  \end{tabular}
\end{center}
\end{tablebox}

In the following analysis, we consider the configuration of agent parameters from Corollary~\ref{cor:small_lambda} in which $\tilde\beta_2>\tilde\beta_1>0$ so that agent 1 buys while agent 2 sells for sufficiently small transaction costs.  By investing in shares of the annuity, agent 1 plans for the future by sacrificing her consumption goods today for the annuity's guaranteed dividend streams later.  Agent 2 seeks to consume now.  He sells shares in the annuity in order to earn immediate consumption units from the sale while forgoing the annuity's future dividend stream.  We consider two configurations of the agents' risk aversions in order to determine the impact of transaction costs on the interest rate and welfare. \\

\noindent\textbf{Case 1: $\alpha_1\geq\alpha_2$.} The configuration $\tilde\beta_2>\tilde\beta_1>0$ and $\alpha_1>\alpha_2>0$ corresponds to a reserved planner (agent 1) and aggressive consumer (agent 2) economy.  Agent 1 is more reserved in terms of her risk aversion and time preferences, while agent 2's more cavalier attitude prods him to take on additional risk in order to consume more now.  As frictions are introduced into the economy, the agents' contrasting risk and time preferences translate into a lower premium placed on the equilibrium interest rate.\\

\noindent\textbf{Case 2: $\alpha_1<\alpha_2$.} The configuration $\tilde\beta_2>\tilde\beta_1>0$ and $\alpha_2>\alpha_1>0$ corresponds to an aggressive planner (agent 1) and reserved consumer (agent 2) economy.  The equilibrium interest rate reflects the market's compensation for future consumption.  A small level of friction (up to the level $\hat\lambda$) benefits the aggressive planner as she is compensated for her desire to invest in the annuity to ensure her future consumption.  Her modest appetite for risk is reflected by her risk aversion parameter $\alpha_1$, which is dominated by $\alpha_2$ yet remains sufficiently small so that the agents are willing to trade with $\tilde\beta_1<\tilde\beta_2$.

When transaction costs become sufficiently large in that they go above the threshold $\hat\lambda$, then the aggressive planner no longer receives an additional interest rate premium for her future planning.  In this case, the interest rate decreases as transaction costs rise above $\hat\lambda$, and the reserved consumer begins to benefit from selling annuity shares in a lower rate environment.  In conclusion, for strictly positive but small transaction costs, a preference for future consumption and moderate amount of risk can be beneficial if the other agent is more risk averse and prefers immediate consumption.  For $\lambda>\hat\lambda$, the interest rate premium declines. 


Figure \ref{fig:rates} plots the equilibrium interest rate as a function of transaction costs $\lambda$ in the range $[0,\frac{\tilde\beta_2-\tilde\beta_1}{\tilde\beta_1+\tilde\beta_2})$ for three different input parameterizations and both cases described above.  We assume that the agents in Figure \ref{fig:rates} have identical income volatility $\sigma_i=.01$ and time preference parameters $\beta_i=.02$, $i=1,2$; see \cite{CGM05RFS}.  Their risk aversions $\alpha_i$ vary between $2$ and $8$, while their income drift $\mu_i$ is between $0.02$ and $0.1$, $i=1,2$; see, \cite{AP10JAE}, \cite{CGM05RFS} and \cite{S86RES}.  The given parameter specifications allow for each agent to have dominant risk aversion while still ensuring that the income-adjusted time preference parameters are ordered by $\tilde\beta_2>\tilde\beta_1>0$.  Though the input parameters are in some sense reasonable and the expression for the interest rate is explicit, the effects of transaction costs on the equilibrium interest rate vary significantly amongst the parameterizations, as illustrated in Figure \ref{fig:rates}.

\begin{figure}
  \center\includegraphics[width=\linewidth]{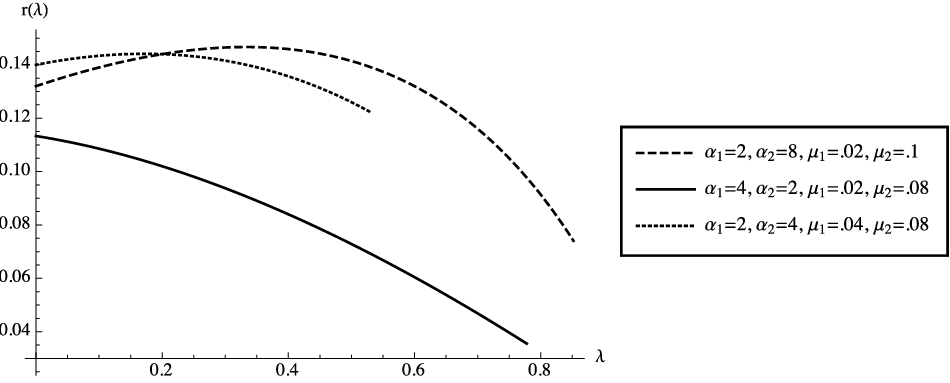}
  \caption{The equilibrium interest rate is plotted as a function of the transaction costs for $\lambda$ in the Case 2 trading range $[0,\frac{\tilde\beta_2-\tilde\beta_1}{\tilde\beta_1+\tilde\beta_2})$.  The solid line plot represents Case 1 with a reserved planner and aggressive consumer economy.  The dashed line plots represent Case 2 with an aggressive planner and reserved consumer economy.}
  \label{fig:rates}
\end{figure}

\subsection{Complete Market Equilibrium}\label{section:complete}

Our model's incompleteness stems from both the unspanned income streams and transaction costs.  When transaction costs are zero, it is possible to complete the market by introducing additional financial securities. In order to measure the welfare loss due to incompleteness, we compare our market to one with the same aggregate demand and agent preferences, where the risks from trading are spanned in a dynamically complete way by introducing risky financial assets.  Since traded financial securities span $B_1$ and $B_2$, we denote the correlation between $B_1$ and $B_2$ by $\rho\in[-1,1]$ so that $d\left<B_1,B_2\right>_t = \rho dt$.

For $i=1,2$, we recall that $U_i(t,x) =  -e^{-\beta_i t-\alpha_i x}$.  We consider a representative agent with weight $\gamma>0$, whose indirect utility for a given consumption stream $c$ is given by the sup-convolution,
$$
  \sU^\gamma(c) := \sup_{\stackrel{c_1\in\sA_1, c_2\in\sA_2:}{\forall t,\ c_{1t}+c_{2t}\leq c_t}} \E\left[\int_0^\infty \left(U_1(t,c_{1t})+\gamma U_2(t,c_{2t})\right)dt\right].
$$
We define the representative risk aversion parameter $\alpha_r>0$ and representative time preference parameter $\beta_r$ by
$$
  \alpha_r := \frac{1}{1/\alpha_1 + 1/\alpha_2}\ \ \ \text{ and } \ \ \ 
  \beta_r := \frac{\beta_1/\alpha_1 +\beta_2/\alpha_2}{1/\alpha_1 + 1/\alpha_2}.
$$
A corresponding representative utility function is given by $U^\gamma(t,x) = e^{-\beta_r t-\alpha_r x}$. For the  aggregate demand, $c=Y_1+Y_2+1$, we have that
\begin{align*}
  \sU^\gamma(Y_1+Y_2+1) 
    &= -\E\left[\int_0^\infty U^\gamma(t,Y_{1t}+Y_{2t}+1)dt\right]\\
    &= -\E\left[\int_0^\infty e^{-\beta_r t -\alpha_r(Y_{1t}+Y_{2t}+1)}\cdot \frac{\alpha_1+\alpha_2}{\alpha_2}\left(\frac{\alpha_1}{\alpha_2\gamma}\right)^{-\frac{\alpha_1}{\alpha_1+\alpha_2}}dt\right] \\
  &= -\frac{\frac{\alpha_1+\alpha_2}{\alpha_2}\left(\frac{\alpha_1}{\alpha_2\gamma}\right)^{-\frac{\alpha_1}{\alpha_1+\alpha_2}} e^{-\alpha_r(y_1+y_2+1)}}{\beta_r+\alpha_r(\mu_1+\mu_2)-\frac{\alpha_r^2}{2}(\sigma_1^2+\sigma_2^2+2\rho\sigma_1\sigma_2)},
\end{align*} 
and the corresponding optimal individual consumptions $c_1, c_2$ given in the sup-convolution are
\begin{gather*}
  c^\gamma_{1t} 
    = \frac{\alpha_2}{\alpha_1+\alpha_2}(Y_{1t}+Y_{2t}+1)
    -\frac{\beta_1-\beta_2}{\alpha_1+\alpha_2}t
    +\frac{1}{\alpha_1+\alpha_2}\log\left(\frac{\alpha_1}{\gamma\alpha_2}\right),\\
  c^\gamma_{2t} 
    = \frac{\alpha_1}{\alpha_1+\alpha_2}(Y_{1t}+Y_{2t}+1)
    +\frac{\beta_1-\beta_2}{\alpha_1+\alpha_2}t
    -\frac{1}{\alpha_1+\alpha_2}\log\left(\frac{\alpha_1}{\gamma\alpha_2}\right).
\end{gather*}

The equilibrium state price density $\xi=(\xi_t)_t$ is described by the first-order condition for the aggregate consumption by $\xi_t = \frac{U^\gamma_c(t,Y_{1t}+Y_{2t}+1)}{U^\gamma_c(0,y_1+y_2+1)}$, where $U^\gamma_c$ denotes the derivative in the consumption variable of $U^\gamma$.  The dynamics of the state price density are given by
$$
  d\xi_t = -\xi_t(\rcmp_t dt + \nu_{1t}dB_{1t}+ \nu_{2t}dB_{2t}), \ \ \xi_0=1,
$$
where $\rcmp$ is the equilibrium interest rate, and $\nu_1$ and $\nu_2$ are the market prices of risk corresponding to the Brownian motions $B_1$ and $B_2$, respectively.  The equilibrium interest rate in the complete market is computed using the first-order condition and $\xi$'s dynamics.  It is given by
$$
  \rcmp = \beta_r + \alpha_r(\mu_1+\mu_2) -\frac{\alpha_r^2}{2}(\sigma_1^2+\sigma_2^2+2\rho\sigma_1\sigma_2).
$$
We note that $\xi$, $\rcmp$, $\nu_1$, and $\nu_2$ are independent of the weighting superscript $\gamma$.  Since the agents' preferences are described by exponential utility functions, these terms do not depend in equilibrium on $\gamma$.

We are interested in the welfare level of a complete market economy and will use the sum of the agents' certainty equivalents as a proxy for welfare.
\begin{definition} For $i=1,2$ and representative agent weight $\gamma>0$, a value $CE_i^{comp}$ is called the \textit{certainty equivalent} for agent $i$ if 
$$
  \int_0^\infty U_i\left(t,CE_i^{comp}\right) dt 
  = -\int_0^\infty U_i\left(t, c^\gamma_{it}\right) dt,
$$
where $c^\gamma_{i}$ is agent $i$'s optimal consumption stream.  We write $CE_i^{comp}(\gamma)$ to emphasize the dependence of the certainty equivalent on $\gamma$.
\end{definition}

The certainty equivalent represents a constant consumption stream level that an agent is willing to exchange for her optimal (stochastic) consumption stream. For $i\in\{1,2\}$ and $\gamma>0$, the certainty equivalent is given by
$$
  CE_i^{comp}(\gamma) = \frac{1}{\alpha_i} \log\left(\frac{\rcmp}{\beta_i}\right)+c^\gamma_{i0}.
$$
The sum of the certainty equivalents can be used as a welfare measure in the economy.  In the complete market equilibrium, for initial wealth stream values $y_1, y_2\in\R$ and representative agent weight $\gamma>0$, we have
$$
  CE_1^{comp}(\gamma)+CE_2^{comp}(\gamma) = 1+y_1+y_2+\frac{1}{\alpha_1} \log\left(\frac{\rcmp}{\beta_1}\right)+\frac{1}{\alpha_2} \log\left(\frac{\rcmp}{\beta_2}\right).
$$
We note that the sum of the certainty equivalents does not depend on the weight $\gamma>0$.

\subsection{Welfare Loss from Transaction Costs}\label{section:welfare}
In this section, we compare the agents' welfare loss due to transaction costs.  Similar to the approach in \cite{D16wp}, we use the sum of the agents' certainty equivalents as our measure of welfare.  We find that the introduction of transaction costs causes a strict loss of welfare.

\label{note:davilla}Our results contrast \cite{D16wp}, which performs a related analysis in a one-period continuum-of-agents model with heterogenous beliefs on a risky asset, an exogenous riskless asset, and no income.  \cite{D16wp} finds that under some parameter specifications, a strictly positive transaction cost can provide a welfare gain.  Our results from Section~\ref{section:rate-effects} show that it may be possible for the interest rate to increase for a strictly positive level of transaction costs, yet the welfare itself cannot be recouped.

\begin{definition} For $i=1,2$, $\lambda\in[0,1)$ and $(x,y)\in\R^2$, a value $CE_i$ is called the \textit{certainty equivalent} for agent $i$ at transaction cost level $\lambda$, initial wealth $x$, and initial income level $y$ if 
$$
  \int_0^\infty U_i\left(t,CE_i\right) dt = V_i(x,y).
$$
We write $CE_i(\lambda)$ to emphasize the dependence of the certainty equivalent on $\lambda$.
\end{definition}

The certainty equivalent represents a constant consumption stream level that an agent is willing to exchange for her optimal (stochastic) consumption stream.  For a given shadow interest rate $r_i>0$, the certainty equivalent can be expressed as
$$
  CE_i = \frac{1}{\alpha_i}\log\left(\frac{r_i}{\beta_i}\right) + \hat c_{i0},
$$
where $\hat c_{i0}$ is the optimal consumption level at time $0$ and is given in \eqref{eqn:cts_consumption}.

In equilibrium, we are interested in the sum of our agents' certainty equivalents as a proxy for the welfare of the economy.  For $\tilde\beta_2>\tilde\beta_1>0$,
\begin{gather*}
  CE_1(\lambda)+CE_2(\lambda) =
  \begin{cases}
     1+y_1+y_2+\frac{1}{\alpha_1}\left(\log\left(\frac{r(\lambda)}{\beta_1(1+\lambda)}\right)
     +\frac{\tilde\beta_1(1+\lambda)}{r(\lambda)}-1\right) & \\
     \ \ \ \ \ \ \ \ \ \ \ \ \ \ \ \ +\frac{1}{\alpha_2}\left(\log\left(\frac{r(\lambda)}{\beta_2(1-\lambda)}\right)+\frac{\tilde\beta_2(1-\lambda)}{r(\lambda)}-1\right),  & \text{if } \lambda<\frac{\tilde\beta_2-\tilde\beta_1}{\tilde\beta_1+\tilde\beta_2},\\
     1+y_1+y_2 + \frac{1}{\alpha_1}\log\left(\frac{\tilde\beta_1}{\beta_1}\right)+\frac{1}{\alpha_2}\log\left(\frac{\tilde\beta_2}{\beta_2}\right), & \text{if } \lambda\geq\frac{\tilde\beta_2-\tilde\beta_1}{\tilde\beta_1+\tilde\beta_2}.
  \end{cases}
\end{gather*}
The following result states that the economy's welfare is decreasing in incompleteness due to transaction costs and unspanned income.  The proof of Proposition~\ref{prop:ce} is given in Section~\ref{section:proofs}.
\begin{proposition}\label{prop:ce}
  Suppose that $\tilde\beta_2>\tilde\beta_1>0$.  $CE_1(\lambda)+CE_2(\lambda)$ is strictly decreasing on $[0,\frac{\tilde\beta_2-\tilde\beta_1}{\tilde\beta_1+\tilde\beta_2})$ and constant on $[\frac{\tilde\beta_2-\tilde\beta_1}{\tilde\beta_1+\tilde\beta_2},1]$.  Moreoever, complete market welfare levels dominate incomplete market welfare levels in that for all transaction costs $\lambda\in[0,1]$ and weights $\gamma>0$, we have $CE_1(\lambda)+CE_2(\lambda)<CE_1^{comp}(\gamma)+CE_2^{comp}(\gamma)$.
\end{proposition}

Figure \ref{fig:ce} plots the economy's welfare change due to incompleteness as a function of transaction costs $\lambda$ for the three input parameterizations that were used in Figure~\ref{fig:rates}.  We assume that the income stream correlation is zero in that $\left<B_1,B_2\right>_t=0$ as in \cite{CGM05RFS}. The given parameter specifications allow for each agent to have dominant risk aversion while still ensuring that the income-adjusted time preference parameters are ordered by $\tilde\beta_2>\tilde\beta_1>0$.  Regardless of the input parameterizations, the sum of the certainty equivalents, $CE_1+CE_2$, is decreasing in the transaction costs.  When no trading occurs, the certainty equivalent sum is constant in transaction costs.  The economy's welfare always decreases when moving from a complete to incomplete market.  Though the interest rate is possibly non-monotone in the transaction cost level as in Figure~\ref{fig:rates}, the welfare in the economy only decreases.

\begin{figure}
  \center\includegraphics[width=\linewidth]{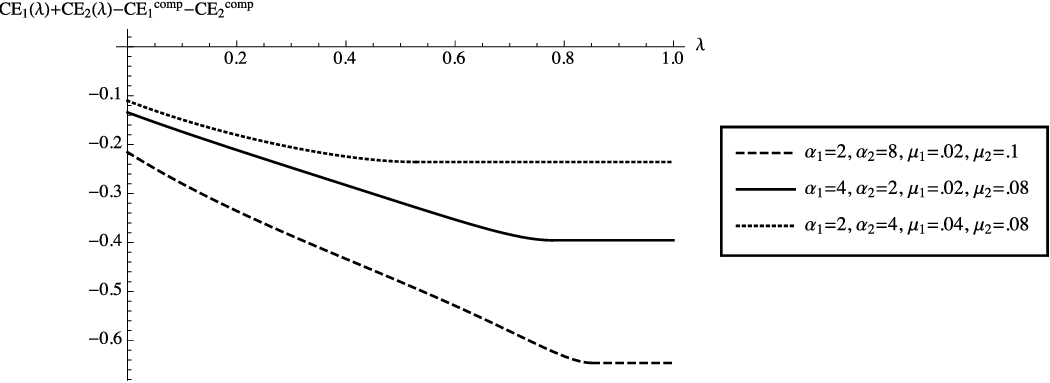}
  \caption{The welfare change due to incompleteness is plotted as a function of the transaction costs. The solid line plot represents Case 1 with a reserved planner and aggressive consumer economy.  The dashed line plots represent Case 2 with an aggressive planner and reserved consumer economy.  The economy's welfare is decreasing in both cases.}
  \label{fig:ce}
\end{figure}

\section{The Bank Account as the Traded Security}\label{section:bank}
Transaction costs in our model prevent us from trading freely between the annuity and a bank account.  Using an annuity as our traded security allows for constant shadow interest rates and trading strategies that are the same at every time point:  either the agents buy, sell, or trade nothing.  The simple structure of transaction cost equilibria with a traded annuity is not possible when the bank account is traded instead.

In this section, we consider a discrete-time equilibrium with transaction costs when the bank account is the traded security.  Theorem~\ref{thm:bank} proves that the traded bank account model prevents a constant-interest rate transaction cost equilibrium.

In contrast to the annuity, the bank account is a financial asset in zero-net supply.  For $i=1,2$, the shadow bank account $B_i$ has the associated interest rate process $r_i=(r_{it_n})_{n\geq 0}$ and is given by $B_{i0}=1$ and 
$$
  B_{it_n} = (1+r_{i0}\Delta)\cdot \ldots\cdot (1+r_{it_{n-1}}\Delta), 
  \ \ \ n\geq 1.
$$
We focus on equilibria yielding constant shadow interest rates $r_{it_n} = r_i$, as in the traded annuity case.

For a given investment strategy $\theta$, agent $i$'s shadow wealth is given by $X_{it_n}:=\theta_{t_n}B_{it_n}$. Since the bank account is in zero-net supply, the self-financing condition in \eqref{eqn:discrete_sf} will be replaced by
$$
  (\theta_{t_{n+1}}-\theta_{t_{n}})B_{i t_{n+1}} 
  = \left(Y_{i t_n}-c_{t_n}+\theta_{t_n}\right)\Delta, \ \ \ n\geq 0.
$$
Thus, for a given consumption and investment strategy $(c,\theta)$, 
the shadow wealth evolves like
$$
  X^c_{it_{n+1}} - X^c_{it_n} = \left(X^c_{it_n}r_i+Y_{it_n}-c_{t_n}\right)\Delta, \ \ \ X_{i0}=\theta_{0} B_{i0} = \theta_{0}.
$$
The definitions of admissibility and the value function are unchanged from Definition~\ref{def:discrete_admissibility} and \eqref{eqn:discrete_value_function}.  As such, Theorem~\ref{thm:discrete_individual} holds for the frictionless shadow market with a bank account carrying a constant interest rate.

\begin{definition}\label{def:bank_equilibrium}
For the transaction cost parameter $\lambda\in[0,1)$, a \textit{transaction cost equilibrium with a bank account} is given by a collection of processes $(r_i, \hat c_i, \hat\theta_i)_{i=1,2}$ such that
\begin{enumerate}[(i)]
  \item \label{def:bank_clearing}
  Real and financial markets clear for each $n\geq 0$:
    $$
      \sum_{i=1}^2\hat c_{i t_n}\Delta = \sum_{i=1}^2 Y_{i t_n}\Delta 
      - 2\lambda\left|\hat\theta_{1 t_{n+1}}-\hat\theta_{1 t_n}\right|B_{t_{n+1}} 
      \ \ \ \text{ and } \ \ \ 
      \hat\theta_{1 t_n} + \hat\theta_{2 t_n} = 0,
    $$
    where in the event of a trade, we define 
    $B_{t_{n+1}}:=\frac{B_{i t_{n+1}}}{1+\lambda}$ 
    if agent $i\in\{1,2\}$ purchases a positive number of annuity shares; 
    that is, $\theta_{i t_{n+1}}-\theta_{i t_n}>0$.
  \item For each agent $i=1,2$, the consumption and investment strategies, $\hat c_i$ and $\hat\theta_i$ with $\hat\theta_{i0}=\theta_{i0}$, are optimal with the shadow bank account value $B_i$:
    $$ 
      V^\Delta_i(\theta_{i0}) = -\sum_{n=0}^\infty \E
      \left[e^{-\beta_i t_n}e^{-\alpha_i \hat c_{i t_n}}\right].
    $$
  \item \label{def:bank_closeness} 
  The shadow markets remain ``close enough'' to the underlying transaction 
  cost market in the following sense:  For each $n\geq 1$,
  $$
    \frac{B_{1 t_n}}{B_{2 t_n}}
    \in\left[\frac{1-\lambda}{1+\lambda},\frac{1+\lambda}{1-\lambda}\right].
  $$
  Moreover, if $\hat\theta_{1 t_n}-\hat\theta_{1t_{n-1}}>0$ then 
  $B_{1 t_n} = B_{2 t_n}\cdot\frac{1+\lambda}{1-\lambda}$.  
  If $\hat\theta_{1 t_n}-\hat\theta_{1 t_{n-1}}<0$ then $B_{1 t_n} = B_{2 t_n}\cdot\frac{1-\lambda}{1+\lambda}$. 
\end{enumerate}
\end{definition}

Theorem~\ref{thm:bank} shows that aside from a stylized special case, any transaction cost equilibrium with a bank account must have non-constant interest rates.  The proof is presented in Section~\ref{section:proofs}.
\begin{theorem}\label{thm:bank}
  Let $\tilde\beta_i:=\beta_i+\alpha_i\mu_i-\frac{\alpha_i^2\sigma_i^2}{2}$, 
  and suppose that $\tilde\beta_i$ and $\lambda$ are strictly positive 
  for $i=1,2$.  
  Suppose that $(r_i,\hat c_i,\hat\theta_i)_{i=1,2}$ is a transaction cost 
  equilibrium with a bank account.  
  If $r_1,r_2$ are strictly positive constants, then the following must hold:
  \begin{enumerate}[(1)]
    \item The agents' parameters satisfy $\tilde\beta_1=\tilde\beta_2$.
    \item No trading occurs in equilibrium: 
      $\hat\theta_{it_n}-\hat\theta_{it_{n-1}} = 0$ for $i=1,2$ and $n\geq 1$.
    \item The shadow rates are identical and satisfy
      $$
        r_1 = r_2 
        = \frac{1}{\Delta}\left(e^{\tilde\beta_1\Delta}-1\right).
      $$
  \end{enumerate}
\end{theorem}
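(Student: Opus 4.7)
The plan is to leverage the exponential growth of the constant-rate bank accounts to rule out mismatched shadow rates, then use the forced equality of the two agents' bank account values together with the closeness condition to forbid any trading, and finally to match the no-trade consumption implied by the self-financing condition with the single-agent optimal consumption from Theorem~\ref{thm:discrete_individual} to pin down $\tilde\beta_1 = \tilde\beta_2$ and the explicit formula for $r$.

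First I would show $r_1 = r_2$. With constant rates, $B_{it_n} = (1+r_i\Delta)^n$, so the ratio $B_{1t_n}/B_{2t_n} = \bigl((1+r_1\Delta)/(1+r_2\Delta)\bigr)^n$ tends to $0$ or $\infty$ unless $r_1 = r_2$; either limit eventually violates the two-sided bound in the closeness condition for large $n$, so $r_1 = r_2 =: r$ and $B_{1t_n} \equiv B_{2t_n}$. Next, I would rule out trades: if agent $1$ buys at time $t_n$, the closeness condition forces $B_{1t_n} = B_{2t_n}\cdot(1+\lambda)/(1-\lambda)$, which combined with $B_{1t_n} = B_{2t_n}$ gives $\lambda = 0$, contradicting $\lambda > 0$. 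The symmetric argument rules out selling, and share-market clearing transfers the conclusion to agent $2$, yielding $\hat\theta_{it_n} = \theta_{i0}$ for all $n \geq 0$ and $i = 1, 2$, which is statement (2).

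For statements (1) and (3), with $\hat\theta_i$ constant the self-financing condition pins down $\hat c_{it_n}$ as an explicit affine function of $Y_{it_n}$. Setting this equal to the single-agent optimal consumption
$$
  \hat c_{it_n} = r\hat X_{it_n} + Y_{it_n} + \frac{1}{\alpha_i r\Delta}\bigl(\tilde\beta_i\Delta - \log(1+r\Delta)\bigr)
$$
from Theorem~\ref{thm:discrete_individual} yields an identity that must hold for all $n$. Since $\hat X_{it_n}$ from Theorem~\ref{thm:discrete_individual} is affine in $t_n$ with slope proportional to $\frac{1}{\Delta}\log(1+r\Delta) - \tilde\beta_i$, requiring the identity to hold for all $n$ makes this slope vanish, giving $r = \frac{1}{\Delta}(e^{\tilde\beta_i\Delta} - 1)$ separately for $i = 1, 2$. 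Since $r$ is common to both agents, this forces $\tilde\beta_1 = \tilde\beta_2$ and the stated closed-form expression for $r$.

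The main obstacle I anticipate is the last step: the Merton-style affine-in-time optimal wealth from Theorem~\ref{thm:discrete_individual} must be reconciled with the rigid no-trade wealth trajectory implied by $\hat\theta_i \equiv \theta_{i0}$ and the exponential growth of $B_i$, and extracting the compatibility condition cleanly is what drives the $\tilde\beta_1 = \tilde\beta_2$ conclusion. Care is needed in tracking the initial-wealth convention (with the bank account one has $X_{i0} = \theta_{i0}$ rather than the $\theta_{i0}/r_i$ that appears in the annuity setting of Theorem~\ref{thm:discrete_individual}) and in carefully isolating the coefficient of $t_n$ when matching the two consumption expressions, so that the single scalar constraint $\frac{1}{\Delta}\log(1+r\Delta) = \tilde\beta_i$ emerges unambiguously for each agent.
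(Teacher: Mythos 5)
Your proposal is correct and follows essentially the same route as the paper's proof: force $r_1=r_2$ from the exponential divergence of $(B_{1t_n}/B_{2t_n})$ against the closeness bound, rule out trades since $B_1\equiv B_2$ is incompatible with the $(1+\lambda)/(1-\lambda)$ ratio required at a trade when $\lambda>0$, and then observe that no trade forces the per-period quantity $\frac{1}{\alpha_i r_i}\left(\log(1+r_i\Delta)-\tilde\beta_i\Delta\right)$ to vanish for each agent. Your consumption-matching phrasing of the last step is just a rewording of the paper's observation that $F_1(r_1)=F_2(r_2)=0$.
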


\label{comment:bank2}Though it is possible to consider stochastic interest rates in a transaction cost equilibrium with a bank account using a system of variational inequalities, it is not clear if such an equilibrium exists.  The simple mathematical structure of the annuity cannot be obtained by studying a traded bank account in its place.  The annuity provides a constant dividend stream at all possible consumption times, which allows agents to receive constant future dividends without trading or incurring transaction costs.

\section{Proofs}\label{section:proofs}
We begin by proving Theorem \ref{thm:discrete_individual} in the discrete-time case.
\begin{proof}
  We check that $\hat c_{i}$ is admissible, by noting that 
  \begin{align*}
    \E
    &\left[\exp\left(-\beta_i t_n-\alpha_ir_i X^{\hat c_i}_{it_n}
      -\alpha_i Y_{it_n}\right)\right]\\
    &= \E\left[\exp\left(-\alpha_ir_iX_{i0} -n \log(1+r_i\Delta)
      -\frac{\alpha_i^2\sigma_i^2}{2}n\Delta 
      - \alpha_i\sum_{k=1}^n\sqrt{\Delta}\sigma_i Z_{i t_k}\right)\right] \\
    &= \left(1+r_i\Delta\right)^{-n}\exp\left(-\alpha_ir_iX_{i0}\right)
      \longrightarrow 0 
      \ \ \ \text{ as $n\rightarrow\infty$.}
  \end{align*}
  We have that $c\mapsto -e^{-\alpha_i c} + e^{-\beta_i\Delta} 
  \E\left[J\left(x(1+r_i\Delta)+y-c, 
  y+\mu_i\Delta + \sigma_i\sqrt{\Delta}Z\right)\right]$ 
  is maximized for $\hat c = \hat c(x,y) = r_i x +y 
  +\frac{\tilde\beta_i}{\alpha_i r_i} 
  - \frac{1}{\alpha_ir_i\Delta}\log(1+r_i\Delta)$, where $Z$ denotes a 
  standard normal random variable. Thus,
  $$
    \left\{-\sum_{k=0}^{n-1}e^{-\alpha_i c_{t_n}} 
    + e^{-\beta_i t_n}J^\Delta_i\left(X^c_{i t_n}, Y_{i t_n}\right)
    \right\}_{n\geq 0}
  $$
  is a supermartingale for all $c\in\sA^\Delta_i$ and is a martingale 
  for $c = \hat c_i\in\sA^\Delta_i$.
  
  Therefore, for $\hat c_i$,
  \begin{align*}
    J(x,y) &= -\E\left[\sum_{k=0}^n e^{-\alpha_i\hat c_{i t_k}}\right]
        + e^{-\beta_i t_{n+1}}\E\left[ J^\Delta_i
        \left(X^{\hat c_i}_{i t_{n+1}}, Y_{i t_{n+1}}\right)\right] \\
      &= -\E\left[\sum_{k=0}^\infty e^{-\alpha_i \hat c_{i t_k}}\right] 
        \ \ \ \ \text{ by the transversality condition},
  \end{align*}
  which implies that $J^\Delta_i \leq V^\Delta_i$.  Similarly, for any 
  $c\in\sA^\Delta_i$, 
  \begin{align*}
    J(x,y) &\geq -\E\left[\sum_{k=0}^n e^{-\alpha_i c_{t_k}}\right]
        + e^{-\beta_i t_{n+1}}\E\left[ J^\Delta_i
        \left(X^{c}_{i t_{n+1}}, Y_{i t_{n+1}}\right)\right] \\
      &= -\E\left[\sum_{k=0}^\infty e^{-\alpha_i c_{t_k}}\right] 
        \ \ \ \ \text{ by the transversality condition}.
  \end{align*}
  Thus, $J^\Delta_i = V^\Delta_i$, and $\hat c_i\in\sA^\Delta_i$ is the 
  optimal consumption policy.  For initial wealth 
  $x = \theta_{i0}A_i=\theta_{i0}/r_i$, 
  the optimal wealth policy corresponding to $\hat c_i$ is $\hat X_i = X^{\hat c_i}_i$, and 
  $$
    \hat X_{i t_n} = \frac{\theta_{i0}}{r_i} + \frac{t_n}{\alpha_i r_i}
    \left(\frac{1}{\Delta}\log\left(1+r_i\Delta\right)
    -\tilde\beta_i\right).
  $$
\end{proof}

We now move towards the proof of Theorem \ref{thm:discrete_equilibrium}.  The self-financing condition \eqref{eqn:discrete_sf} with the optimal policies \eqref{thm:discrete_individual} and \eqref{eqn:discrete_opt_consumption} imply
\begin{equation}\label{eqn:discrete_derive_F}
  \left(Y_{i t_n}-\hat c_{i t_n} + \hat\theta_{i t_n}\right)\Delta
  = \left(\hat\theta_{it_n}-\hat\theta_{i t_{n-1}}\right) A_{i t_n}
  = \frac{1}{\alpha_ir_i}\left(\log(1+r_i\Delta)-\tilde\beta_i\Delta\right).
\end{equation}
For $i=1,2$, we define
\begin{equation}\label{def:discrete_F}
  F_i(r):= \frac{1}{\alpha_i r}\left(\log\left(1+r\Delta\right)-\tilde\beta_i\Delta\right), \ \ \ r>0.
\end{equation}
Using Definition \ref{def:discrete_equilibrium} part \eqref{def:discrete_clearing}, we seek solutions $r_1, r_2>0$ such that
$$
  F_1(r_1)+F_2(r_2) = \lambda\left(\frac{\left|F_1(r_1)\right|}{1+\lambda}+\frac{\left|F_2(r_2)\right|}{1-\lambda}\right)1_{\{F_1(r_1)\geq 0\}}+\lambda\left(\frac{\left|F_1(r_1)\right|}{1-\lambda}+\frac{\left|F_2(r_2)\right|}{1+\lambda}\right)1_{\{F_1(r_1)< 0\}}.
$$
By rewriting this equation and including Condition \eqref{def:discrete_closeness} from Definition \ref{def:discrete_equilibrium}, we seek $r_1, r_2>0$ such that
\begin{equation}\label{eqn:clearing_cond_const}
  F_{2}(r_2) = 
  \begin{cases}
    -\frac{1-\lambda}{1+\lambda}F_{1}(r_1), & \text{if } F_{1}(r_1)\geq 0,\\
    -\frac{1+\lambda}{1-\lambda}F_{1}(r_1), & \text{if } F_{1}(r_1)\leq 0.
  \end{cases}
\end{equation}
and
\begin{equation}\label{eqn:shadow_constraint_const}
  \frac{r_2}{r_1} = 
  \begin{cases}
    \frac{1+\lambda}{1-\lambda}, & \text{if } F_{1}(r_1)> 0,\\
    \frac{1-\lambda}{1+\lambda}, & \text{if } F_{1}(r_1)< 0,\\
    \in\left[\frac{1-\lambda}{1+\lambda},\frac{1+\lambda}{1-\lambda}\right], & \text{if } F_{1}(r_1)= 0,\\
  \end{cases}
\end{equation}
\begin{proposition}\label{prop:discrete_F}
Let $\tilde\beta_i:= \beta_i+\alpha_i\mu_i - \alpha_i^2\sigma_i^2/2$, 
and suppose that $\tilde\beta_i$ is strictly positive for $i=1,2$. 
There exists a unique strictly positive solution pair $r_1,r_2$ to \eqref{eqn:clearing_cond_const} and \eqref{eqn:shadow_constraint_const}.

\noindent{\bf Case 1:}  If $\frac{e^{\tilde\beta_1\Delta}-1}{e^{\tilde\beta_2\Delta}-1}\in\left[\frac{1-\lambda}{1+\lambda},\frac{1+\lambda}{1-\lambda}\right]$, then
\begin{equation}\label{eqn:discrete_rs}
  r_1 = \frac{e^{\tilde\beta_1\Delta}-1}{\Delta}
  \ \ \ \text{ and } \ \ \ 
  r_2 = \frac{e^{\tilde\beta_2\Delta}-1}{\Delta}.
\end{equation}

\noindent{\bf Case 2:}  If 
we have $\frac{e^{\tilde\beta_2\Delta}-1}{e^{\tilde\beta_1\Delta}-1}>\frac{1+\lambda}{1-\lambda}$, then the unique positive solutions satisfy
$$
  r_1\in\left(\frac{e^{\tilde\beta_1\Delta}-1}{\Delta}, \frac{1-\lambda}{1+\lambda}\left(\frac{e^{\tilde\beta_2\Delta}-1}{\Delta}\right)\right)
  \ \ \ \text{ and } \ \ \ 
  r_2 = \frac{1+\lambda}{1-\lambda} r_1.
$$
\end{proposition}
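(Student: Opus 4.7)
The plan is to reduce the system \eqref{eqn:clearing_cond_const}--\eqref{eqn:shadow_constraint_const} to a one-dimensional problem in $r_1$ alone and resolve it by monotonicity and the intermediate value theorem. The starting observation is that each $F_i$ has a unique positive zero, located at $r_i^\ast:=(e^{\tilde\beta_i\Delta}-1)/\Delta$; positivity of $r_i^\ast$ uses the assumption $\tilde\beta_i>0$. Since $\log(1+r\Delta)-\tilde\beta_i\Delta$ is strictly increasing in $r$ and vanishes at $r_i^\ast$, and since $\tfrac{1}{\alpha_i r}>0$, we have $F_i(r)<0$ on $(0,r_i^\ast)$ and $F_i(r)>0$ on $(r_i^\ast,\infty)$.

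Next I would split into three cases according to the sign of $F_1(r_1)$, using \eqref{eqn:shadow_constraint_const} to eliminate $r_2$. If $F_1(r_1)=0$ then $r_1=r_1^\ast$, and \eqref{eqn:clearing_cond_const} forces $F_2(r_2)=0$, hence $r_2=r_2^\ast$; this pair solves the system exactly when $r_2^\ast/r_1^\ast$ lies in $[\tfrac{1-\lambda}{1+\lambda},\tfrac{1+\lambda}{1-\lambda}]$, which is the Case 1 hypothesis, and it produces the formulas \eqref{eqn:discrete_rs}. If instead $F_1(r_1)>0$ then \eqref{eqn:shadow_constraint_const} forces $r_2=\tfrac{1+\lambda}{1-\lambda}r_1$, and substituting this into \eqref{eqn:clearing_cond_const} and rearranging collapses the clearing equation to
$$
G(r_1) := \alpha_1\log\left(1+\tfrac{1+\lambda}{1-\lambda}r_1\Delta\right) + \alpha_2\log(1+r_1\Delta) = \alpha_1\tilde\beta_2\Delta + \alpha_2\tilde\beta_1\Delta.
$$
Here $G$ is continuous and strictly increasing on $(0,\infty)$, so uniqueness of a root on any subinterval is automatic.

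For existence in Case 2, I would restrict attention to the interval $I:=(r_1^\ast,\tfrac{1-\lambda}{1+\lambda}r_2^\ast)$, which is precisely the set of $r_1$ with $F_1(r_1)>0$ and $F_2(\tfrac{1+\lambda}{1-\lambda}r_1)<0$—the correct sign pattern for \eqref{eqn:clearing_cond_const}. Non-emptiness of $I$ is equivalent to $r_2^\ast/r_1^\ast>\tfrac{1+\lambda}{1-\lambda}$, i.e., the Case 2 hypothesis \eqref{eqn:discrete_case2}. A short computation shows that $G(r_1^\ast)$ undershoots the target constant precisely when $\tfrac{1+\lambda}{1-\lambda}(e^{\tilde\beta_1\Delta}-1)<e^{\tilde\beta_2\Delta}-1$, and $G\bigl(\tfrac{1-\lambda}{1+\lambda}r_2^\ast\bigr)$ overshoots the target precisely when the same inequality holds; both reduce to \eqref{eqn:discrete_case2}. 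The IVT then yields the unique $r_1\in I$, and we set $r_2=\tfrac{1+\lambda}{1-\lambda}r_1$.

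For global uniqueness across all three sign regimes, I would observe symmetrically that the branch $F_1(r_1)<0$ forces $r_1$ into the interval $(\tfrac{1+\lambda}{1-\lambda}r_2^\ast,r_1^\ast)$, which is non-empty only under the mirror hypothesis $r_1^\ast/r_2^\ast>\tfrac{1+\lambda}{1-\lambda}$—and this is incompatible with both Case 1 (by strict inclusion in the closeness interval) and Case 2 (by inequality direction). Thus in each case exactly one sign regime admits a solution, and the solution is unique within it. The only step requiring real computation is the matching of the endpoint inequalities for $G$ with the Case 2 condition; once that cancellation is observed, the remainder is a routine monotonicity-plus-IVT argument, which I expect to be the main (though modest) technical obstacle.
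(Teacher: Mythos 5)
Your proposal is correct and follows essentially the same route as the paper: both reduce the system via the shadow-price constraint $r_2=\tfrac{1+\lambda}{1-\lambda}r_1$ to a single strictly monotone equation in $r_1$ (your $G$ is the logarithm of the paper's equation \eqref{eqn:discrete_F_rewritten} scaled by $\alpha_1\alpha_2$), and both exclude the other sign regimes by comparing the ratio of the zeros $r_1^\ast, r_2^\ast$ of $F_1,F_2$ with $\tfrac{1\pm\lambda}{1\mp\lambda}$. Your endpoint evaluations of $G$ at $r_1^\ast$ and $\tfrac{1-\lambda}{1+\lambda}r_2^\ast$ additionally make explicit the localization of $r_1$ that the paper asserts without computation, but this is a refinement rather than a different argument.
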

\begin{proof}
We show the existence of the unique solution pair by examining both cases.  Suppose that 
$$
  \frac{e^{\tilde\beta_1\Delta}-1}{e^{\tilde\beta_2\Delta}-1}
  \in\left[\frac{1-\lambda}{1+\lambda},\frac{1+\lambda}{1-\lambda}\right].
$$
  Then $r_1$ and $r_2$ as in \eqref{eqn:discrete_rs} is the unique solution 
  to \eqref{eqn:clearing_cond_const} and \eqref{eqn:shadow_constraint_const} 
  such that $F_1(r_1) = F_2(r_2) = 0$.

To show uniqueness, we proceed by contradiction.  Assume for the sake of contradiction that there exist strictly positive solutions $r_1, r_2$ such that  
  $F_1(r_1)>0$.  We have that $F_1(r_1)>0$ if an only if 
  $F_2(r_2)<0$, $r_1\Delta>e^{\tilde\beta_1\Delta}-1$, and 
  $r_2\Delta<e^{\tilde\beta_2\Delta}-1$.  Then by 
  \eqref{eqn:shadow_constraint_const},
  $$
    \frac{e^{\tilde\beta_1\Delta}-1}{e^{\tilde\beta_2\Delta}-1}
    <\frac{r_1}{r_2}
    = \frac{1-\lambda}{1+\lambda} 
    \leq \frac{e^{\tilde\beta_1\Delta}-1}{e^{\tilde\beta_2\Delta}-1},
  $$
  which is a contradiction.  Here, we have used that 
  $\tilde\beta_1, \tilde\beta_2$ are strictly positive to ensure that 
  $e^{\tilde\beta_i\Delta}-1>0$.  The same argument applies to rule out 
  the case when $F_1(r_1)<0$ and $F_2(r_2)>0$. 
  Therefore, we must have that 
  $F_1(r_1) = F_2(r_2) = 0$, in which case 
  $r_1 = \frac{e^{\tilde\beta_1\Delta}-1}{\Delta}$ and 
  $r_2 = \frac{e^{\tilde\beta_2\Delta}-1}{\Delta}$.

  We now consider the existence of a solution in Case 2.  For 
  $F_1(r_1)>0$, \eqref{eqn:clearing_cond_const} and 
  \eqref{eqn:shadow_constraint_const} reduce to solving for $r_1>0$ 
  such that 
  \begin{equation}\label{eqn:discrete_F_rewritten}
    \left(1+r_1\Delta\cdot\frac{1+\lambda}{1-\lambda}\right)^{1/\alpha_2}
    \left(1+r_1\Delta\right)^{1/\alpha_1}
    = \exp\left(\left(\frac{\tilde\beta_1}{\alpha_1}
    +\frac{\tilde\beta_2}{\alpha_2}\right)\Delta\right),
  \end{equation}
  while $r_2 = \frac{1+\lambda}{1-\lambda}\cdot r_1$.  The assumption that 
  $\tilde\beta_1,\tilde\beta_2$ are strictly positive ensures that the right 
  hand side of \eqref{eqn:discrete_F_rewritten} is strictly bigger than $1$. 
  We note that 
  $x\mapsto \left(1+x\cdot\frac{1+\lambda}{1-\lambda}\right)^{1/\alpha_2}
  \left(1+x\right)^{1/\alpha_1}$ strictly increases from $1$ to $\infty$ 
  for $x\in[0,\infty)$.  Thus, there exists a unique solution $r_1>0$ to 
  \eqref{eqn:discrete_F_rewritten}.  
  Moreover, $\frac{e^{\tilde\beta_2\Delta}-1}{e^{\tilde\beta_1\Delta}-1}
  >\frac{1+\lambda}{1-\lambda}$ implies that
  $$
    r_1\in\left(\frac{e^{\tilde\beta_1\Delta}-1}{\Delta}, 
    \frac{1-\lambda}{1+\lambda}\left(
    \frac{e^{\tilde\beta_2\Delta}-1}{\Delta}\right)\right).
  $$
  
  We show uniqueness for Case 2 by contrapositive, which 
  will rule out the possibility of finding solutions for which  
  $F_1(r_1)\leq 0$.  Suppose that
  there exist strictly positive solutions 
  $r_1, r_2$ such that $F_1(r_1)\leq 0$.  Since $F_1(r_1)\leq 0$ if and 
  only if $F_2(r_2)\geq 0$, $r_1\Delta\leq e^{\tilde\beta_1\Delta}-1$, 
  and $r_2\Delta\geq e^{\tilde\beta_2\Delta}-1$, 
  we have that
  $$
    \frac{1+\lambda}{1-\lambda} \geq \frac{r_2}{r_1} 
    \geq \frac{e^{\tilde\beta_2\Delta}-1}{e^{\tilde\beta_1\Delta}-1},
  $$
  as desired. 
\end{proof}
\begin{proof}[Proof of Theorem \ref{thm:discrete_equilibrium}]
  By Theorem \ref{thm:discrete_individual} and Definition \ref{def:discrete_equilibrium}, we must solve \eqref{eqn:clearing_cond_const} and \eqref{eqn:shadow_constraint_const} for the equilibrium shadow interest rates.  Proposition \ref{prop:discrete_F} provides us with the existence and uniqueness of positive shadow interest rates, as desired.
  
  The agents choose not to trade in Case 1, and the market interest rate 
  cannot be uniquely determined.  The annuity values $A$ consistent with 
  this equilibrium must satisfy
  $$
    A \in \left[\frac{A_1}{1+\lambda},\frac{A_1}{1-\lambda}\right]
    \cap \left[\frac{A_2}{1+\lambda},\frac{A_2}{1-\lambda}\right]
    = \left[\frac{\max(A_1,A_2)}{1+\lambda},
    \frac{\min(A_1,A_2)}{1-\lambda}\right].
  $$
  Since $A_i = \frac{\Delta}{e^{\tilde\beta_i\Delta}-1}$ for $i=1,2$, we can rewrite the 
  above interval as
  $$
    A \in 
      \left[\frac{\Delta}{(1+\lambda)\left(e^{\min(\tilde\beta_1, \tilde\beta_2)\Delta}-1\right)}, 
      \frac{\Delta}{(1-\lambda)\left(e^{\max(\tilde\beta_1,
      \tilde\beta_2)\Delta}-1\right)}\right].
  $$
  This interval is nonempty by \eqref{eqn:discrete_case1}.  Since $A=1/r$, we 
  have that 
  $$
    r \in\left[\frac{1-\lambda}{\Delta}
    \left(e^{\max(\tilde\beta_1, \tilde\beta_2)\Delta}-1\right), 
    \frac{1+\lambda}{\Delta}\left(e^{\min(\tilde\beta_1,
    \tilde\beta_2)\Delta}-1\right)\right] \neq \emptyset.
  $$
  
  Trading occurs in Case 2, in which case we are able to determine a unique 
  market interest rate.  When 
  $$
    \frac{e^{\tilde\beta_2\Delta}-1}{e^{\tilde\beta_1\Delta}-1} 
    > \frac{1+\lambda}{1-\lambda},
  $$
  we have that $r_2 = \frac{1+\lambda}{1-\lambda}\cdot r_1$ while $r_1>0$ 
  solves \eqref{eqn:discrete_F_rewritten}.  In this case, $1/r_1=A_1=A(1+\lambda)=(1+\lambda)/r$, which implies that $r = (1+\lambda)r_1$.  Similarly, $r=(1-\lambda)r_2$. Therefore, the market interest rate $r>0$ is determined by
  $$ 
    \left(1+\frac{r\Delta}{1-\lambda}\right)^{
    \frac{1}{\alpha_2\Delta}}\left(1+\frac{r\Delta}{1+\lambda}\right)^{\frac{1}{\alpha_1\Delta}} 
    = e^{\left(\frac{\tilde\beta_1}{\alpha_1}
    +
    \frac{\tilde\beta_2}{\alpha_2}\right)},
  $$
  and the shadow interest rates are given in terms of
  $$
    r = r_1 (1+\lambda) = r_2 (1-\lambda).
  $$
\end{proof}

In continuous time, the proofs of Theorems~\ref{thm:cts_individual_agent} and \ref{thm:cts_eq} mirror their discrete-time counterparts.  The continuous-time analog of $F_i$ defined in \eqref{def:discrete_F} is given for $i=1,2$ by
$$
  F_i(r):= \frac{1}{\alpha_i}\left(1-\frac{\tilde\beta_i}{r}\right), 
  \ \ \ r>0.
$$
We now prove Theorem~\ref{thm:discrete_to_cts}.
\begin{proof}[Proof of Theorem~\ref{thm:discrete_to_cts}]
  Since $\tilde\beta_2/\tilde\beta_1>\frac{1+\lambda}{1-\lambda}$, 
  Theorem \ref{thm:cts_eq} shows that trade occurs for 
  $\Delta=0$ and $r(0)$ is given uniquely by \eqref{eqn:cts_rate}. 
   Moreover, 
  $$
    \frac{e^{\tilde\beta_2\Delta}-1}{e^{\tilde\beta_1\Delta}-1} 
    \longrightarrow \frac{\tilde\beta_2}{\tilde\beta_1} 
    \ \ \ \text{ as $\Delta\rightarrow 0$,}
  $$
  which by Theorem~\ref{thm:discrete_equilibrium} implies that trade occurs for 
  sufficiently small $\Delta>0$.  In this case, $r(\Delta)>0$ is given 
  uniquely by the solution to \eqref{eqn:discrete_r}.
  
  For $(\Delta, r)\in[0,\infty)\times(0,\infty)$, we define 
  $$
   G(\Delta,r):=
   \begin{cases}
     \left(1+\frac{r\Delta}{1+\lambda}\right)^{\frac{1}{\alpha_1\Delta}}
     \left(1+\frac{r\Delta}{1-\lambda}\right)^{\frac{1}{\alpha_2\Delta}}, 
     & \text{for $\Delta>0$}
     \\
     \exp\left(r\left(\frac{1}{\alpha_1(1+\lambda)}
     +\frac{1}{\alpha_2(1-\lambda)}\right)\right), 
     & \text{for $\Delta=0$.}
   \end{cases}
   $$
   For sufficiently small $\Delta>0$ and $\Delta=0$, 
   $r(\Delta)$ is chosen such that 
   $G(\Delta,r(\Delta)) = \exp\left(\frac{\tilde\beta_1}{\alpha_1}+
   \frac{\tilde\beta_2}{\alpha_2}\right)$. 
   Since $G$ is smooth on $[0,\infty)\times(0,\infty)$ and 
   $\frac{\partial G}{\partial r}(0,r(0)) \neq 0$ (a one-sided derivative), 
   the implicit function 
   theorem implies that $r(\Delta)\longrightarrow r(0)$ as 
   $\Delta\rightarrow 0$.
\end{proof}

We now prove Corollary~\ref{cor:small_lambda}.
\begin{proof}
  By Theorem~\ref{thm:cts_eq}, an equilibrium with trade will occur for 
  $\lambda\in\left[0,
  \frac{\tilde\beta_2-\tilde\beta_1}{\tilde\beta_2+\tilde\beta_1}\right)$, 
  in which agent 1 buys shares of the annuity, agent 2 sells shares 
  of the annuity, and the equilibrium interest rate is given by 
  $$
    r(\lambda) = \frac{\tilde\beta_1/\alpha_1+\tilde\beta_2/\alpha_2}{\frac{1}{\alpha_1(1+\lambda)}+\frac{1}{\alpha_2(1-\lambda)}}.
  $$
  Differentiating in $\lambda$, we see that $r$ has local extrema at 
  $$
    \lambda^-= \frac{\left(\sqrt{\alpha_1}-\sqrt{\alpha_2}\right)^2}{\alpha_2-\alpha_1} 
    \ \ \text{ and } \ \ 
    \lambda^+= \frac{\left(\sqrt{\alpha_1}+\sqrt{\alpha_2}\right)^2}{\alpha_2-\alpha_1}.
  $$
  When $\alpha_1>\alpha_2$, we have that both $\lambda^-, \lambda^+<0$.  In this case, $r$ is strictly decreasing on $\left[0,
  \frac{\tilde\beta_2-\tilde\beta_1}{\tilde\beta_2+\tilde\beta_1}\right)$.
  When $\alpha_1<\alpha_2$, we have that $\lambda^-\in(0,1)$, $\lambda^+>1$,  $r$ is strictly increasing on $\left[0,\min\left(\lambda^-,\frac{\tilde\beta_2-\tilde\beta_1}{\tilde\beta_2+\tilde\beta_1}\right)\right)$, and $r$ is strictly decreasing on $\left(\min\left(\lambda^-,\frac{\tilde\beta_2-\tilde\beta_1}{\tilde\beta_2+\tilde\beta_1}\right), \frac{\tilde\beta_2-\tilde\beta_1}{\tilde\beta_2+\tilde\beta_1}\right)$.  Recognizing that $\hat\lambda=\lambda^-$ yields the desired result.
\end{proof}

We next prove Proposition~\ref{prop:ce}.
\begin{proof}
  The sum of the incomplete market certainty equivalents is differentiable on $[0,\frac{\tilde\beta_2-\tilde\beta_1}{\tilde\beta_1+\tilde\beta_2})$.  A calculation of the derivative yields
  $$
    CE_1'(\lambda)+CE_2'(\lambda) = \frac{-2(\tilde\beta_2(1-\lambda)-\tilde\beta_1(1+\lambda))(\alpha_1(1+\lambda)^2+\alpha_2(1-\lambda)^2)}{(\alpha_1(1+\lambda)+\alpha_2(1-\lambda))(1+\lambda)^2(1-\lambda)^2(\tilde\beta_1\alpha_2+\tilde\beta_2\alpha_1)}.
  $$
  Since $\tilde\beta_2(1-\lambda)-\tilde\beta_1(1+\lambda)$ is strictly positive for $\lambda\in[0,\frac{\tilde\beta_2-\tilde\beta_1}{\tilde\beta_1+\tilde\beta_2})$, we have that $CE_1+CE_2$ is strictly decreasing.  On $[\frac{\tilde\beta_2-\tilde\beta_1}{\tilde\beta_1+\tilde\beta_2},1]$, we have that $CE_1+CE_2$ is constant.
  
  We next verify that $CE_1(\lambda)+CE_2(\lambda)<CE_1^{comp}(\gamma)+CE_2^{comp}(\gamma)$, and we recall that the right hand side does not depend on $\gamma$.  Since $\lambda\mapsto CE_1(\lambda)+CE_2(\lambda)$ is decreasing, it suffices to check that the inequality holds for $\lambda = 0$.
  
  By algebra, we have that $CE_1(\lambda)+CE_2(\lambda)<CE_1^{comp}(\gamma)+CE_2^{comp}(\gamma)$ if and only if $\rcmp > r(0)$, which in turn holds if and only if
  $$
    \frac{\alpha_1}{\alpha_2}\sigma_1^2+\frac{\alpha_2}{\alpha_1}\sigma_2^2-2\rho\sigma_1\sigma_2 >0.
  $$
  When $\sigma_1\sigma_2\geq 0$, this inequality holds since
  $$
    \frac{\alpha_1}{\alpha_2}\sigma_1^2+\frac{\alpha_2}{\alpha_1}\sigma_2^2-2\rho\sigma_1\sigma_2
    \geq \frac{\alpha_1}{\alpha_2}\sigma_1^2+\frac{\alpha_2}{\alpha_1}\sigma_2^2-2\sigma_1\sigma_2 
    = \left(\sqrt{\frac{\alpha_1}{\alpha_2}}\sigma_1-\sqrt{\frac{\alpha_2}{\alpha_1}}\sigma_2\right)^2>0.
  $$
  The case when $\sigma_1\sigma_2<0$ is handled analogously.
\end{proof}

Finally, we show Theorem~\ref{thm:bank}.
\begin{proof}[Proof of Theorem~\ref{thm:bank}]
  Assume that $r_1, r_2$ are strictly positive constants. 
  By modifying \eqref{eqn:discrete_derive_F} to account for a traded bank 
  account, we arrive at the same form of $F_i$ as in \eqref{def:discrete_F}.  
  By Definition~\ref{def:bank_equilibrium}~\eqref{def:bank_closeness}, 
  $r_1$ and $r_2$ must satisfy \eqref{eqn:clearing_cond_const} and 
  $$
    \left(\frac{1+r_1}{1+r_2}\right)^n 
    \in \left[\frac{1-\lambda}{1+\lambda}, 
    \frac{1+\lambda}{1-\lambda}\right], \ \ \ \text{ for $n\geq 0$},
  $$
  while for each $n\geq 0$, $B_{1 t_n} = B_{2 t_n}\cdot
  \frac{1+\lambda}{1-\lambda}$ if 
  $\hat\theta_{1 t_n}-\hat\theta_{1t_{n-1}}>0$, and 
  $B_{1 t_n} = B_{2 t_n}\cdot\frac{1-\lambda}{1+\lambda}$ if 
  $\hat\theta_{1 t_n}-\hat\theta_{1 t_{n-1}}<0$.
  Since $\lambda\neq 0$, we must have that $r_1=r_2$ and 
  $\hat\theta_{1t_n}-\hat\theta_{1t_{n-1}} = 
  \hat\theta_{2t_n}-\hat\theta_{2t_{n-1}} = 0$ for all $n\geq 1$.  
  Moreover, \eqref{eqn:clearing_cond_const} implies that 
  $F_1(r_1)=F_2(r_2)=0$, and thus
  $$
    \log(1+r_1\Delta) = \tilde\beta_1\Delta 
    = \tilde\beta_2\Delta
    = \log(1+r_2\Delta),
  $$
  as desired.
\end{proof}

\bibliographystyle{plain}
\bibliography{finance_bib}

\end{document}